\DeclareFontFamily{OT1}{pzc}{}
\DeclareFontShape{OT1}{pzc}{m}{it}{<-> s * [1.10] pzcmi7t}{}
\DeclareMathAlphabet{\mathpzc}{OT1}{pzc}{m}{it}
\newtheorem{theorem}{Theorem}[section]
\newtheorem{lemma}[theorem]{Lemma}
\newtheorem{proposition}[theorem]{Proposition}
\newtheorem{remark}[theorem]{Remark}
\providecommand{\R}{\mathbb{R}}
\providecommand{\SO}{\mathbf{SO}}
\providecommand{\GL}{\mathbf{GL}}
\providecommand{\grpG}{\mathbf{G}}
\providecommand{\gothg}{\mathfrak{g}}
\providecommand{\gothm}{\mathfrak{m}}
\providecommand{\gothX}{\mathfrak{X}} 
\providecommand{\so}{\mathfrak{so}}
\providecommand{\Sph}{\mathrm{S}}
\providecommand{\calG}{\mathcal{G}}
\providecommand{\calM}{\mathcal{M}}
\providecommand{\calN}{\mathcal{N}}
\providecommand{\calU}{\mathcal{U}}
\providecommand{\vecL}{\mathbb{L}}
\providecommand{\Sym}{\mathbb{S}} 
\providecommand{\tT}{\mathrm{T}} 
\providecommand{\eb}{\mathbf{e}} 
\DeclareMathOperator{\Ad}{Ad}
\providecommand{\id}{\mathrm{id}} 
\providecommand{\Lyap}{\mathcal{L}} 
\providecommand{\td}{\mathrm{d}}
\providecommand{\tD}{\mathrm{D}}
\providecommand{\ddt}{\frac{\td}{\td t}}
\providecommand{\mr}[1]{\mathring{#1}} 
\providecommand{\scirc}{%
    \hbox{\fontfamily{\rmdefault}\fontsize{0.4\dimexpr(\f@size pt)}{0}\selectfont{\raisebox{-0.52ex}[0ex][-0.52ex]{$\circ$}}}}
\providecommand{\ucirc}{%
    \hbox{\fontfamily{\rmdefault}\fontsize{0.4\dimexpr(\f@size pt)}{0}\selectfont{\raisebox{0.0ex}[0ex][-0.52ex]{$\circ$}}}}
\mathchardef\mhyphen="2D
\providecommand{\etal}{\textit{et al.~}}
\renewcommand{\mr}[1]{#1^\circ}
\providecommand{\tT}{\mathrm{T}}
\providecommand{\gothm}{\mathfrak{m}}
\providecommand{\lchart}{\vartheta}
\providecommand{\figwidth}{0.7}
\newcommand{\newhl}{}
\begin{document}



\title{Equivariant Filter (EqF)}
\headertitle{Equivariant Filter (EqF)}


\author{
\href{https://orcid.org/0000-0001-2345-6789}{\includegraphics[scale=0.06]{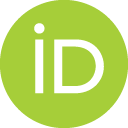}\hspace{1mm}
Pieter van Goor}
\\
    Systems Theory and Robotics Group \\
    Australian Centre for Robotic Vision \\
	Australian National University \\
    ACT, 2601, Australia \\
    \texttt{Pieter.vanGoor@anu.edu.au} \\
	\And	\href{https://orcid.org/0000-0001-2345-6789}{\includegraphics[scale=0.06]{orcid.png}\hspace{1mm}
    Tarek Hamel}
\\
    I3S (University C\^ote d'Azur, CNRS, Sophia Antipolis) \\
    and Insitut Universitaire de France \\
    \texttt{THamel@i3s.unice.fr} \\
	\And	\href{https://orcid.org/0000-0001-2345-6789}{\includegraphics[scale=0.06]{orcid.png}\hspace{1mm}
    Robert Mahony}
\\
    Systems Theory and Robotics Group \\
    Australian Centre for Robotic Vision \\
	Australian National University \\
    ACT, 2601, Australia \\
	\texttt{Robert.Mahony@anu.edu.au} \\
}

\maketitle

\begin{abstract}
The kinematics of many systems encountered in robotics, mechatronics, and avionics are naturally posed on homogeneous spaces; that is, their state lies in a smooth manifold equipped with a transitive Lie group symmetry.
This paper proposes a novel filter, the \emph{Equivariant Filter (EqF)}, by posing the observer state on the symmetry group, linearising global error dynamics derived from the equivariance of the system, and applying extended Kalman filter design principles.
We show that equivariance of the system output can be exploited to reduce linearisation error and improve filter performance.
Simulation experiments of an example application show that the EqF significantly outperforms the extended Kalman filter and that the reduced linearisation error leads to a clear improvement in performance.
\end{abstract}


\section{Introduction}


The importance of Lie group symmetries in analysing nonlinear control systems has been recognised since the 1970s \cite{1972_brocket_SIAM,1972_jurdjevic_JDE,1973_brocket_SIAM}.
Jurdjevic and Sussmann generalised the ideas of Brockett \cite{1972_brocket_SIAM} around the theory of systems on matrix Lie groups to abstract Lie groups \cite{1972_jurdjevic_JDE}.
Cheng \etal described necessary and sufficient conditions for observability of such systems
\cite{1990_cheng_SIAM}.
A comprehensive discussion of these early results can be found in Chapter 6 of Jurdjevic's book \cite{1997_jurdjevic_geometric_control}.

In one of the earliest works applying Lie group symmetry for observer design, Salcudean \cite{1991_salcudean_TAC} proposed a nonlinear observer for attitude estimation of a satellite using the quaternion representation of rotation.
Thienel \etal  \cite{2003_thienel_TAC} added an analysis of observability and  bias estimation.
Aghannan \cite{2003_aghannan_TAC} proposed a general observer design methodology for Lagrangian systems by exploiting invariance properties.
Driven by the emerging aerial robotics community and the need for robust and simple attitude observers, Mahony \etal \cite{2008_Mahony_tac} developed a nonlinear observer for rotation matrices posed directly on the matrix Lie group $\SO(3)$ with almost-globally asymptotically stable error dynamics.
In parallel, Bonnabel \etal \cite{2007_Bonnabel_cdc} proposed the left-invariant extended Kalman filter and applied this to attitude estimation.
Both of these observers are fundamentally derived from the symmetry properties of the underlying system and had significant impact in the robotics community.
This motivated more general studies of systems on Lie groups and homogeneous spaces (smooth manifolds with transitive Lie group actions) \cite{2008_Bonnabel_TAC,2009_Bonnabel_TAC,2009_lageman_TAC}.
In \cite{2009_Bonnabel_cdc}, Bonnabel \etal proposed the Invariant Extended Kalman Filter (IEKF): a sophisticated observer design for systems on Lie groups with invariance properties.
Given a system on a homogeneous space, a property of the system is termed \emph{equivariant} if it changes in a compatible way under transformation of the state by the symmetry Lie group.
Mahony \etal considered observer design for equivariant kinematic systems on homogeneous spaces with equivariant output functions using Lyapunov design principles \cite{RM_2013_Mahony_nolcos}.
This general design was extended in \cite{2015_khosravian_automatica} to consider biased input measurements and general gain mappings.
Work by Barrau and Bonnabel \cite{2017_Barrau_tac} extended the IEKF from invariant systems to a broader class of ``group affine'' systems, again focusing on Lie groups only, and characterised the filter's convergence properties.
Lie group variational integrators, first proposed in \cite{2004_leok_foundations}, were applied to discretise an observer for rigid body attitude estimation by Itazi and Sanyal in \cite{2014_izadi_automatica}.
In \cite{2016_johansen_ecc}, Johansen and Fossen proposed the exogenous Kalman filter, which Stovner \etal \cite{2018_Stovner_automatica} showed to be globally exponentially stable when applied to attitude estimation by taking advantage of the invariant Lie group dynamics.
Roumeliotis \etal \cite{1999_roumeliotis_icra} applied the error-state Kalman filter \cite[Chapter 6.3]{1982_maybeck_stochastic} to attitude estimation for mobile robot localisation using the group product between unit quaternions to define the error state.
Sola \cite{2017_sola_quaternion} recently provided a detailed description of the error-state Kalman filter for inertial-aided navigation, including the use of the unit quaternion group product for the definition of attitude error.
In the same spirit, Hamel and Samson \cite{2017_hamel_tac}  constructed a general Riccati observer for a class of systems of time-varying nonlinear systems and showed the local exponential stability of the origin of observer error  as long as  the linearised system about the true state is uniformly observable.
Barrau \etal \cite{2016_Barrau_arxive} provided a novel Lie group that models the classical Simultaneous Localisation and Mapping (SLAM) problem in robotics.
Parallel work \cite{2017_Mahony_cdc} found the same structure and showed how it also models invariance in the SLAM problem, leading to a homogeneous state space structure.
This led to recent work exploiting equivariance in visual (inertial) odometry \cite{2021_vangoor_auto,2020_mahony_cdc,2021_goor_EquivariantFilterVisual} where there is no direct Lie group structure for the state space, and filter design methods for which a Lie group structure is necessary cannot be applied.
The present paper draws from recent work on observer design specifically targeting systems on homogeneous spaces \cite{2020_vangoor_cdc,2021_mahony_EquivariantFilterDesign, 2021_mahony_arcar}.

In this paper, we propose the Equivariant Filter (EqF): a novel filter design for equivariant kinematic systems posed on homogeneous spaces.
The filter is derived by exploiting the Lie group symmetry of the kinematic system to derive global error coordinates.
The EqF observer dynamics are defined on the symmetry Lie group; however, the correction is computed using a Riccati equation associated with linearised error kinematics about a fixed origin on the homogeneous state space.
The proposed architecture fully exploits the symmetry properties of the system without requiring that the system model is posed explicitly on the Lie group and applies to any system with just the basic equivariance property.
In contrast, the existing state-of-the-art observer/filter design methodologies for systems with symmetry depend on assuming properties of the system: invariance for the constructive designs \cite{RM_2013_Mahony_nolcos} or the more general group affine structure that is only defined for systems with Lie group state space for the IEKF designs \cite{2017_Barrau_tac}.
Interestingly, the proposed EqF specialises to the IEKF \cite{2017_Barrau_tac} when the system considered is posed directly on a Lie group and displays the specific group affine structure required for the IEKF derivation.
Barrau \etal \cite{2017_Barrau_tac} showed that the IEKF admits an exact linearisation of the deterministic part of the state equation of the error dynamics, a property that the EqF shares on compatible systems, leading to significant performance gains versus an EKF derived without regard for the symmetry.
In addition, the EqF is designed to accommodate symmetries that are compatible with the configuration output.
When such a symmetry is used, we propose a novel approximation of the output equation that eliminates second-order error in the output linearisation.
Implementing the equivariant filter design methodology with this approximation leads to improved performance and we term the resulting observer the EqF$^\star$. 
The EqF methodology extends existing filter design methodologies for invariant systems by both applying to a broad class of systems on homogeneous spaces (rather than only systems with Lie group state space) and exploiting equivariance of the configuration output to further enhance filter performance compared to state-of-the-art.

We demonstrate the potential of the EqF and EqF$^\star$ using a simple example of single-bearing estimation; that is, the problem of determining the bearing of a fixed direction (with state space the sphere $\Sph^2$) in an inertial frame with respect to a rotating frame with known angular velocity.
There are no global coordinates on the sphere and application of the EKF requires consideration of local or embedded coordinates.
Moreover, the state space is not a Lie group, precluding the direct application of filter design methods that require a Lie group state space.
We simulate a classical EKF, the EqF and the EqF$^\star$ on this system.
The simulation results demonstrate the known advantage \cite{2017_Barrau_tac} of exploiting symmetry versus (even a careful) EKF design in local coordinates and goes on to demonstrate a significant performance advantage for the EqF$^\star$ over the standard EqF.
The interested reader can also find a more tutorial exposition of equivariant observer design with a discussion of the equivariant filter in the preprint \cite{2021_mahony_arcar}.


In \S\ref{sec:motivating_example}, the design of an EKF for the example of single-bearing estimation is detailed to motivate the developments in the sequel.
In \S\ref{sec:prelim}, we define key notation and provide preliminary results.
General systems on homogeneous spaces are defined and discussed in \S\ref{sec:problem_description}.
In \S\ref{sec:eqf_design}, the notion of a lifted system is used to develop the dynamics of a global state error by exploiting the Lie symmetry.
We also discuss the linearisation of the error dynamics and show that the existence of an equivariant output leads to a better (lower error) linearisation of the output map.
In \S\ref{sec:eqf_definition}, the EqF equations are presented and we provide some insight into tuning the filter in practice.
The example problem of single-bearing estimation is revisited in \S\ref{sec:sphere_example}.
We show simulation results to demonstrate the performance of the EqF and EqF$^\star$ compared to an EKF.
In \S\ref{sec:conclusion}, we provide concluding remarks.
In Appendix \ref{sec:filter_implementation}, we provide a step-by-step design methodology for implementing an EqF, and in Appendix \ref{sec:iekf_specialisation} we show how the EqF specialises to the invariant extended Kalman filter for a specific subclass of systems on Lie groups.
We also provide open source code\footnote{\url{https://github.com/STR-ANU/auto_eqf}} for the implementation of the EqF for general systems based on numerical differentiation.

\section{Motivating Example: Single Bearing Estimation}
\label{sec:motivating_example}

In this section we present the problem of single bearing estimation.
This example has been chosen to be as simple as possible algebraically while presenting a problem where the equivariant filter approach is of interest.
Consider a robot equipped with a gyroscope that measures its angular velocity $\Omega \in \R^3$ and a magnetometer that measures the magnetic field in the robot's body-fixed frame $\eta \in \Sph^2$.
The noise free dynamics of $\eta$ are
\begin{align}
    \dot{\eta} &= f_{\Omega}(\eta) := -\Omega^\times \eta. \label{eq:ex_sphere_dynamics}
\end{align}
where $\Omega^\times \in \R^{3 \times 3}$ is the matrix
\begin{align*}
    \Omega^\times :=
    \begin{pmatrix}
        0 & -\Omega_3 & \Omega_2 \\
        \Omega_3 & 0 & -\Omega_1 \\
        -\Omega_2 & \Omega_1 & 0
    \end{pmatrix}.
\end{align*}
The noise free measurement model considered is a 3-axis magnetometer
\begin{align}
    y = h(\eta) = c_m \eta \in \R^3 \label{eq:ex_sphere_output},
\end{align}
where $c_m$ is the (known) magnetic field strength.
The system is nonlinear due to the manifold structure of the sphere $\Sph^2$ comprising both its state and its measurement.

Since the system is nonlinear, a nonlinear filter design is required and the extended Kalman filter is the industry standard choice.
To provide context for the main contribution of the paper we provide a sketch of the derivation of a classical EKF on the sphere for direction estimation.
In Section \ref{sec:sphere_example} we will provide simulation results that compare the classical EKF for this problem to the proposed EqF that is the main contribution of the paper.

\textbf{EKF for direction estimation:}
\newhl{
The EKF requires a system to be written in Euclidean coordinates and one must either choose local coordinates for the sphere or embed the sphere in $\R^3$ and introduce a state constraint in the filter design \cite{1990_tahk_tac}.
The sphere $\Sph^2$ has many possible choices of local coordinates, some of which cover the space almost globally.
However, all choices introduce significant nonlinearities in the system dynamics leading to loss of filter performance.
To avoid this and provide a state-of-the-art EKF implementation, we instead choose to implement the EKF by embedding the system dynamics \eqref{eq:ex_sphere_dynamics} and measurement \eqref{eq:ex_sphere_output} from the sphere into $\R^3$ and incorporating a nonlinear state constraint as in \cite{1990_tahk_tac}.
That is, the extended noise free system in $\R^3$ is
\begin{align}
    \dot{\eta} &= -\Omega^\times \eta, \notag \\
    h(\eta) &= c_m \frac{\eta}{\vert \eta \vert}, \notag \\
    g(\eta) & = \| \eta \|^2 = 1, \label{eq:ex_sphere_constraint}
\end{align}
where $g$ is the nonlinear state constraint, and can be treated as an additional measurement with a small associated uncertainty due to linearisation error \cite{1990_tahk_tac}.
This approach is particularly attractive for direction estimation on the sphere since the embedded dynamics \eqref{eq:ex_sphere_dynamics} on $\R^3$ are linear time-varying.
}
%
Let $\hat{\eta} \in \R^3$  be the EKF state estimate and define the state error and innovation $\tilde{\eta} := \eta - \hat{\eta} \in \R^3$ and $\tilde{y} := h(\eta) - h(\hat{\eta})$, respectively.
The output linearisation is given by
 \begin{align}
     \tilde{y}
     &= \frac{c_m}{\vert \hat{\eta} \vert} \left(
     I_3 - \frac{\hat{\eta}\hat{\eta}^\top}{\hat{\eta}^\top\hat{\eta}}
     \right) \tilde{\eta} + O(\vert \tilde{\eta} \vert^2),
     \label{eq:EKF_C1_t}
 \end{align}
noting that $\hat{\eta} \in \R^3$.
The constraint innovation $\tilde{z} = g(\eta) - g(\hat{\eta})$ has linearisation
\begin{align}
     \tilde{z} & = - 2 \hat{\eta}^\top \tilde{\eta} + O(\vert \tilde{\eta} \vert^2).
 \end{align}
The linearised constraint is assigned a nonzero virtual covariance to reflect the linearisation error, and this choice is a design parameter without stochastic justification.

The proposed EKF described above delivers the best performance for a classical EKF design for the direction estimation problem that the authors are aware of.
However, it is clear that applying a classical EKF to this problem is not straightforward.
The practitioner must choose between local and embedded coordinates.
The former are not globally well-defined and introduce linearisation errors.
The latter require the introduction and linearisation of state constraints.
The resulting algorithms are not intrinsic in any sense, depending on a whole sequence of design decisions, and although the performance of the filters is accepted, a question always remains about whether a different choice of coordinates, or an embedding with a different state constraint could have improved the results.

\section{Preliminaries}
\label{sec:prelim}

For a comprehensive introduction to smooth manifolds and Lie groups, the authors recommend \cite{2012_lee_SmoothManifolds}.

For a smooth manifold $\calM$, let $\tT_\xi \calM$ denote the tangent space of $\calM$ at $\xi$, let $\tT\calM$ denote the tangent bundle, and let $\gothX(\calM)$ denote the infinite dimensional linear vector space of vector fields over $\calM$.
Given a vector field $f \in \gothX(\calM)$, $f (\xi) \in \tT_\xi \calM$ denotes the value of $f$ at $\xi \in \calM$.

Given a differentiable function between smooth manifolds $h:\calM \to \calN$, the linear map
\begin{align*}
    \tD_\xi |_{\xi'} h(\xi): \tT_{\xi'} \calM &\to  \tT_{h({\xi'})} \calN, \\
    v &\mapsto \tD_\xi |_{\xi'} h(\xi)[v],
\end{align*}
denotes the differential of $h$ with respect to the argument $\xi$ evaluated at $\xi'$.
The shorthand $\tD h$ is used when the argument and base point are implied.
In general, the composition of two maps $h_1$ and $h_2$ is written $h_1 \circ h_2$, with $h_1 \circ h_2(\xi) := h_1(h_2(\xi))$.
For linear maps $H_1, H_2$ the composition may also be written $H_1 \cdot H_2$ or $H_1 H_2$ to emphasise the linearity and the link to matrix multiplication.
We apply this notation frequently in the context of the chain rule,
\begin{align*}
    \tD_\xi |_{\xi'} (h_1 \circ h_2)(\xi) [v]
    &= \tD_\eta |_{h_2(\xi')} h_1(\eta)
    \cdot \tD_\xi |_{\xi'} h_2(\xi) [v], \\
    \tD (h_1 \circ h_2)[v]
    &= \tD h_1  \tD h_2 [v].
\end{align*}

A general Lie group is denoted $\grpG$ and has Lie algebra $\gothg$.
The identity element is written $\id \in \grpG$.
For any $X \in \grpG$, the left and right translations by $X$ are denoted $L_X$ and $R_X$, respectively, and are defined by
\begin{align*}
    L_X(Y) := XY, \qquad R_X(Y) := YX,
\end{align*}
where $XY$ denotes the group product between $X$ and $Y$.
The adjoint map $\Ad: \grpG \times \gothg \to \gothg$ is defined by
\begin{align*}
    \Ad_X [U] = \tD L_X \cdot \tD R_{X^{-1}} [U],
\end{align*}
for every $X \in \grpG$ and $U \in \gothg$.

A right action of a Lie group $\grpG$ on a manifold $\calM$ is a smooth map $\phi: \grpG \times \calM \to \calM$ that satisfies
\begin{align*}
    \phi(Y, \phi(X, \xi)) & = \phi(XY, \xi), \\
    \phi(\id, \xi) & = \xi,
\end{align*}
for any $X,Y \in \grpG$ and any $\xi \in \calM$.
We will choose all symmetries in this paper to be right actions, noting that any left action can be transformed to a right action by considering the inverse parameterization of the group \cite{2021_mahony_EquivariantFilterDesign}.
Right-handed symmetries are naturally associated with the body-fixed sensor suites typical for most mobile robot applications.
For a fixed $X \in \grpG$, the partial map $\phi_X : \calM \to \calM$ is defined by
$\phi_X(\xi) := \phi(X,\xi)$.
Likewise, for a fixed $\xi \in \calM$, the partial map $\phi_\xi : \grpG \to \calM$ is defined by
$\phi_\xi(X) := \phi(X,\xi)$.
An action $\phi$ is called \emph{transitive} if, for any $\xi, \xi' \in \calM$, there exists $X \in \grpG$ such that $\phi(X, \xi) = \xi'$.

A homogeneous space $\calM$ is a manifold with a smooth and transitive symmetry action $\phi : \grpG \times \calM \to \calM$, where $\grpG$ is a Lie group.
Let $m$ denote the dimension of $\calM$.
For any element $\mr{\xi} \in \calM$ one may always choose an $m$-dimensional subspace $\gothm \subset \gothg$ such that $\tD_{E | \id} \phi(E, \mr{\xi})$ is a linear isomorphism $\gothm \to \tT_{\mr{\xi}} \calM$.
Let $\cdot^\wedge : \R^m \to \gothm \subset \gothg$ be a linear isomorphism that identifies $\gothm$ with $\R^m$, and let its inverse be $\cdot^\vee : \gothm \to \R^m$.
Then, at least in a local neighbourhood $\calU_{\mr{\xi}}$ of $\mr{\xi}$, the map $\lchart : \calU_{\mr{\xi}} \subset \calM \to \R^m$, defined by the unique element $\lchart(\xi) \in \R^m$ such that $\phi(\exp(\lchart(\xi)^\wedge), \mr{\xi}) = \xi$, is well defined and smooth.
The map $\lchart : \calU_{\mr{\xi}} \to \R^m$ is called a normal coordinate chart of the homogeneous space (about $\mr{\xi}$) \cite{1963_kobayashi_foundations}.
Note that normal coordinates are not unique since the choice of subspace $\gothm$ and its identification with $\R^m$ are arbitrary.

\begin{proposition}
Any right action $\phi: \grpG \times \calM \to \calM$ induces a right action on the vector fields over $\calM$, denoted $\Phi: \grpG \times \gothX(\calM) \to \gothX(\calM)$, and defined by
\begin{align}
    \Phi (X, f) := \tD \phi_X \cdot f \circ \phi_X^{-1},
\label{eq:Phi}
\end{align}
for any $f \in \gothX(\calM)$ and $X \in \grpG$.
For a fixed $X \in \grpG$, $\Phi_X$ is a linear map on $\gothX(\calM)$.
\end{proposition}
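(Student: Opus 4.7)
The plan is to verify two things: first, that $\Phi(X,f)$ is actually a well-defined vector field on $\calM$; second, that $\Phi$ satisfies the two axioms of a right action; and finally, that the linearity of $\Phi_X$ follows from pointwise linearity of the differential.

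First I would check well-definedness. For each $\xi \in \calM$, setting $\eta = \phi_X^{-1}(\xi)$, we have $f(\eta) \in \tT_\eta \calM$, and the differential $\tD \phi_X |_\eta$ maps $\tT_\eta \calM$ into $\tT_{\phi_X(\eta)}\calM = \tT_\xi\calM$. Hence $\Phi(X,f)(\xi) \in \tT_\xi \calM$, and smoothness follows from smoothness of $f$, $\phi_X$, and $\phi_X^{-1}$ (the latter being smooth because $\phi$ is a right action, so $\phi_X$ is a diffeomorphism with inverse $\phi_{X^{-1}}$).

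Next I would verify the two right-action axioms. For the identity, note that the axiom $\phi(\id,\xi)=\xi$ gives $\phi_\id = \id_\calM$, so $\tD \phi_\id$ is the identity on $\tT\calM$ and $\phi_\id^{-1} = \id_\calM$. Thus $\Phi(\id,f) = f$. For compatibility, I would apply the chain rule. From $\phi(Y,\phi(X,\xi)) = \phi(XY,\xi)$ we get $\phi_Y \circ \phi_X = \phi_{XY}$, hence $\phi_X^{-1} \circ \phi_Y^{-1} = \phi_{XY}^{-1}$. Then
\begin{align*}
\Phi(Y,\Phi(X,f)) &= \tD \phi_Y \cdot (\tD \phi_X \cdot f \circ \phi_X^{-1}) \circ \phi_Y^{-1} \\
&= (\tD \phi_Y \cdot \tD \phi_X) \cdot f \circ (\phi_X^{-1} \circ \phi_Y^{-1}) \\
&= \tD(\phi_Y \circ \phi_X) \cdot f \circ (\phi_Y \circ \phi_X)^{-1} \\
&= \tD \phi_{XY} \cdot f \circ \phi_{XY}^{-1} = \Phi(XY,f).
\end{align*}

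Finally, linearity of $\Phi_X$: for $f_1,f_2 \in \gothX(\calM)$ and $\alpha \in \R$, at any $\xi \in \calM$ we have $(f_1 + \alpha f_2)(\phi_X^{-1}(\xi)) = f_1(\phi_X^{-1}(\xi)) + \alpha f_2(\phi_X^{-1}(\xi))$ in $\tT_{\phi_X^{-1}(\xi)}\calM$, and since $\tD \phi_X$ is a linear map between tangent spaces it distributes over this sum. Hence $\Phi_X(f_1 + \alpha f_2) = \Phi_X(f_1) + \alpha \Phi_X(f_2)$.

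There is no real obstacle here; the entire argument is essentially bookkeeping that follows from the right-action axioms for $\phi$, the chain rule, and the pointwise linearity of the tangent map. The only subtle point worth stating explicitly is that $\phi_X$ is a diffeomorphism with $\phi_X^{-1} = \phi_{X^{-1}}$, which is what makes the expression $\tD \phi_X \cdot f \circ \phi_X^{-1}$ legitimate as a smooth vector field on all of $\calM$.
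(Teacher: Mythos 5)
Your proposal is correct and follows essentially the same route as the paper: compatibility of $\Phi$ with the group product via the chain rule and $\phi_Y \circ \phi_X = \phi_{XY}$, and linearity of $\Phi_X$ from the pointwise linearity of $\tD\phi_X$. The only difference is that you spell out well-definedness, smoothness, and the identity axiom explicitly, whereas the paper disposes of these by observing that $\phi_X$ is a diffeomorphism and $\Phi(X,\cdot)$ is the push-forward operator.
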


\begin{proof}
For fixed $X \in \grpG$, $\phi_X : \calM \to \calM$ is a diffeomorphism and the map $\Phi(X, \cdot)$ is the push forward operator \cite{2012_lee_SmoothManifolds}.
To see that $\Phi$ is a group action, let $X,Y \in \grpG$ and $f \in \gothX(\calM)$.
Then,
\begin{align*}
    \Phi(Y, \Phi(X, f))
    &= \tD \phi_Y \tD \phi_X f \circ \phi_X^{-1} \circ \phi_Y^{-1}, \\
    &= \tD \phi_{XY} f \circ \phi_{Y^{-1}X^{-1}}, \\
    &= \Phi(XY, f).
\end{align*}
To prove linearity
let $c_1,c_2 \in \R$ and let $f_1, f_2 \in \gothX(\calM)$, then
\begin{align*}
    & \Phi(X, c_1 f_1 + c_2 f_2)(\xi) \\
    &= \tD \phi_X (c_1 f_1 + c_2 f_2)({\phi_X^{-1}(\xi)}), \\
    &= c_1 \tD \phi_X (f_1)({\phi_X^{-1}(\xi)}) + c_2 \tD \phi_X (f_2)({\phi_X^{-1}(\xi)}), \\
    &= c_1 \Phi(X, f_1)(\xi) + c_2 \Phi(X, f_2)(\xi),
\end{align*}
where the second-last line follows from the linearity of the differential $\tD \phi_X$.
\end{proof}

\section{Problem Description}
\label{sec:problem_description}

\subsection{Systems on Homogeneous Spaces}

Let $\calM$ be a smooth $m$-dimensional manifold termed the \textit{state space}.
An affine (kinematic) system on $\calM$ may be written
\begin{align*}
    \dot{\xi} = f_0(\xi) + \sum_i f_i (\xi) u_i,
\end{align*}
for some vector fields $f_0, f_1,..., f_l \in \gothX(\calM)$ and scalar input signals $u_1, ..., u_l \in \R$.
Such a system is represented by an affine system function \cite{2021_mahony_arcar}
\begin{align} \label{eq:kinematic_system_function}
    f: \vecL &\to \gothX(\calM), \notag \\
    u &\mapsto f_{u} \in \gothX(\calM),
\end{align}
where $f_u(\xi) := f_0(\xi) + \sum_i f_i (\xi) u_i$ and $\vecL$ is a real vector space termed the \textit{input space} with $u = (u_1,...,u_l) \in \vecL$ the components of $u$.
We refer to $f_0$ as the drift term of the system and $f_i$ as the input vector fields.
Trajectories $\xi(t) \in \calM$ on a time interval $[0,\infty)$ of the system considered are solutions of the ordinary differential equation
\begin{align}
    \dot{\xi} &= f_{u(t)}(\xi), & \xi(0) &\in \calM, \label{eq:kinematic_system}
\end{align}
with initial condition $\xi(0)$ and measured input signal $u(t) \in \vecL$.
We will assume $u(t)$ is sufficiently smooth to ensure unique well-defined solutions for all time.
The configuration output \cite{2021_mahony_arcar} for a kinematic system is a function
\begin{align} \label{eq:measurement_system}
    h:\calM \to \calN \subset \R^n,
\end{align}
where $\calN$ is a smooth manifold termed the \emph{output space}, embedded in $\R^n$.

Let $\grpG$ be a Lie group with Lie algebra $\gothg$, and suppose that $\calM$ is a homogeneous space of $\grpG$; that is, there exists a smooth, transitive, right group action of $\grpG$ on $\calM$,
\begin{align} \label{eq:phi_action}
    \phi: \grpG \times \calM \to \calM.
\end{align}


A \textit{lift} \cite{2021_mahony_arcar} for the system function \eqref{eq:kinematic_system} is a map $\Lambda: \calM \times \vecL \to \gothg$ satisfying
\begin{align} \label{eq:lift_condition}
    \tD_X |_\id \phi_\xi(X)\left[ \Lambda(\xi, u) \right] &= f_{u}(\xi),
\end{align}
for every $\xi \in \calM$ and $u \in \vecL$.
Any kinematic system \eqref{eq:kinematic_system} defined on a homogeneous space admits a lift $\Lambda: \calM \times \vecL \to \gothg$ satisfying \eqref{eq:lift_condition} \cite{RM_2013_Mahony_nolcos, 2021_mahony_arcar}.
In the particular case where $\tD_X |_\id \phi_\xi(X)$ is invertible ($\phi$ is a free group action \cite{2012_lee_SmoothManifolds}), the lift is unique \cite{2021_mahony_arcar}.

\subsection{Equivariant Systems}
\label{sec:equivariant_systems}
Equivariance of a system is a powerful structural property that can be formulated for any system on a homogeneous space.
There are many established examples of equivariant systems  \cite{2008_Mahony_tac,2008_Bonnabel_TAC,2013_zamani_tac,2014_izadi_automatica,2015_khosravian_automatica,2017_Barrau_tac,2018_Stovner_automatica,2021_vangoor_auto} where exploiting the equivariant structure has led to high performance observers and filters.

A kinematic system \eqref{eq:kinematic_system} is termed \textit{equivariant} if there exists a smooth right group action $\psi: \grpG \times \vecL \to \vecL$ such that
\begin{align*}
    \tD \phi_X f_{u}(\xi) &= f_{\psi_X(u)}(\phi_X(\xi)),
\end{align*}
for all $u \in \vecL$, $\xi \in \calM$ and $X \in \grpG$.
In this case, recalling \eqref{eq:Phi}, one has
\begin{align}
    f_{\psi_X(u)}(\xi)
    &= f_{\psi_X(u)}(\phi_X(\phi_X^{-1}(\xi))), \notag \\
    &= \tD \phi_X f_{u}(\phi_X^{-1}(\xi)), \notag \\
    &= \Phi_X f_{u}(\xi). \label{eq:equivalent_def_equivariance}
\end{align}
That is, $f_{\psi_X(u)} = \Phi_X f_{u}$ as vector fields on $\calM$, or equivalently, the diagram
\begin{align*}
    \xymatrix{
        \vecL \ar[r]_{\psi_X} \ar[d]_f & \vecL \ar[d]_f \\
        \gothX(\calM) \ar[r]_{\Phi_X} & \gothX(\calM)
    }
\end{align*}
commutes for every $X \in \grpG$.

Suppose the system \eqref{eq:kinematic_system} is equivariant with state symmetry $\phi : \grpG \times \calM \to \calM$ and input symmetry $\psi : \grpG \times \vecL \to \vecL$.
A lift $\Lambda$ for the system is equivariant if
\begin{align} \label{eq:equivariant_lift}
    \Lambda(\phi(X, \xi), \psi(X, u)) = \Ad_{X^{-1}} \Lambda(\xi, u),
\end{align}
for all $X \in \grpG$, $\xi \in \calM$, and $u \in \vecL$, or equivalently, the diagram
\begin{align*}
    \xymatrixcolsep{4pc}
    \xymatrix{
        \gothg \ar[r]_{\Ad_{X^{-1}}} & \gothg \\
        \calM \times \vecL \ar[r]_{\phi_X \times \psi_X} \ar[u]_\Lambda & \calM \times \vecL \ar[u]_\Lambda
    }
\end{align*}
commutes for every $X \in \grpG$.

\begin{remark} \label{rmk:lift}
On a matrix Lie group where the group action is right translation, the lift is just $\Lambda(X,u) = X^{-1} \dot{X} = X^{-1} f_u(X) \in \gothg$.
However, for a general system on a homogeneous space there may be many choices of equivariant lift.
In such a case, an equivariant lift can be found by expanding the conditions \eqref{eq:lift_condition} and \eqref{eq:equivariant_lift} for the particular system, and searching for a solution for $\Lambda$ \cite{2021_mahony_EquivariantFilterDesign}.
\end{remark}

The system \eqref{eq:kinematic_system_function} has \emph{equivariant output} if there exists an action $\rho : \grpG \times \calN \to \calN$ satisfying
\begin{align}
    \rho(X, h(\xi)) = h(\phi(X, \xi)),
\end{align}
for all $X \in \grpG$ and $\xi$, or equivalently, the diagram
\begin{align*}
\xymatrix{
    \calM \ar[r]_{\phi_X} \ar[d]_h & \calM \ar[d]_h \\
    \calN \ar[r]_{\rho_X} & \calN
}
\end{align*}
commutes for every $X \in \grpG$.

\section{Equivariant Error System}
\label{sec:eqf_design}


The filter design proposed in the sequel can be applied to any equivariant system where an equivariant lift can be found.
In \cite{2021_mahony_EquivariantFilterDesign} the authors showed that any system on a homogeneous space can be extended to an equivariant system, and for any equivariant system an equivariant lift can always be constructed although the resulting construction may be infinite dimensional in the input space.
Thus, in principle, the proposed EqF design applies to all systems on homogeneous spaces, although the authors note that the primary systems of interest are those which are equivariant directly, or for which the extension terminates in a finite dimensional input space.

\subsection{Observer Architecture}

Consider an equivariant kinematic system \eqref{eq:kinematic_system} with state symmetry $\phi: \grpG \times \calM \to \calM$ and input symmetry $\psi: \grpG \times \vecL \to \vecL$.
Let $\Lambda : \calM \times \vecL \to \gothg$ be an equivariant lift for this system.
Given a fixed but arbitrary $\mr{\xi} \in \calM$ and a known input signal $u(t) \in \vecL$, the \textit{lifted system} \cite{2021_mahony_EquivariantFilterDesign} is defined by the ODE
\begin{align} \label{eq:lifted_system}
    \dot{X} &= \tD L_X \Lambda(\phi(X, \mr{\xi}), u),
    \quad    \phi(X(0), \mr{\xi}) = \xi(0),
\end{align}
where $X(t) \in \grpG$.
The trajectory of the lifted system projects down to the original system trajectory \cite{2021_mahony_EquivariantFilterDesign} by
\begin{align} \label{eq:lifted_system_projection}
    \phi(X(t), \mr{\xi}) \equiv \xi(t).
\end{align}
Define the observer state to be an element of the group $\hat{X} \in \grpG$, and use the lifted system as the internal model for the observer dynamics
\begin{align} \label{eq:observer_ode}
    \dot{\hat{X}} &= \tD L_{\hat{X}} \Lambda(\phi(\hat{X}, \mr{\xi}), u) + \tD R_{\hat{X}} \Delta, \quad  \hat{X}(0) = \id,
\end{align}
where the correction term $\Delta$ remains to be chosen \cite{2021_mahony_EquivariantFilterDesign,2021_mahony_arcar}.

The state estimate of the observer is given by the projection
\[
\hat{\xi} =  \phi(\hat{X}(t), \mr{\xi}).
\]
Thus, the state of the observer is posed on the symmetry group rather than the state space of the system kinematics.
The correction term $\Delta$ will be chosen by applying a Riccati observer to global error dynamics linearised about the fixed origin $\mr{\xi}$.

\subsection{Global Error Dynamics}

Let $\xi \in \calM$ be the true state of the system.
Choose an arbitrary fixed origin $\mr{\xi} \in \calM$ and let $\hat{X} \in \grpG$ be a state observer with dynamics given by \eqref{eq:observer_ode}.
Define the \emph{global state error}
\begin{align} \label{eq:state_error_dfn}
    e := \phi(\hat{X}^{-1},\xi).
\end{align}
Note that  $\phi(\hat{X},\mr{\xi}) = \xi$ if and only if $e = \mr{\xi}$.
Therefore, the goal of the filter design will be to drive $e \to \mr{\xi}$.
Define the \emph{origin velocity}
\begin{align} \label{eq:u_prime}
\mr{u} := \psi(\hat{X}^{-1}, u).
\end{align}
Note that the origin velocity $\mr{u}(t)$ can always be constructed since both $\hat{X}(t)$ and $u(t)$ are available to the observer.
The action $\psi(\hat{X}^{-1}, u)$ is the equivariant system generalisation of the well known adjoint action $\Ad_{\hat{X}} U$ that transforms between left and right invariant algebra elements for invariant vector fields on a Lie group.
In the error dynamics \eqref{eq:state_error_kinematics} derived below, the action $\psi_{\hat{X}^{-1}}$ transforms the measured system input into the correct representation for the error dynamics around the chosen origin $\mr{\xi}$.

\begin{lemma}
Let the global state error $e$ be defined as in \eqref{eq:state_error_dfn} and the origin velocity $\mr{u}$ be defined as in \eqref{eq:u_prime}.
The dynamics of $e$ are given by
\begin{align}
    \dot{e} = \tD \phi_{e} \left( \Lambda(e, \mr{u}) - \Lambda(\mr{\xi}, \mr{u}) - \Delta \right), \label{eq:state_error_kinematics}
\end{align}
and depend only on $e$, $\mr{u}$ and the correction $\Delta$.
\end{lemma}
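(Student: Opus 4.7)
The plan is to differentiate the definition $e = \phi(\hat{X}^{-1},\xi)$ with respect to time using the chain rule in both arguments, substitute the observer dynamics \eqref{eq:observer_ode} and the lift condition \eqref{eq:lift_condition}, and then collapse everything onto the base point $\mr{\xi}$ via the equivariance of the lift \eqref{eq:equivariant_lift}. I would first write
\begin{align*}
\dot{e} = \tD_X|_{\hat{X}^{-1}}\phi_\xi(X)\bigl[\ddt\hat{X}^{-1}\bigr] + \tD_\xi|_\xi\phi_{\hat{X}^{-1}}(\xi)[\dot{\xi}].
\end{align*}
Starting from $\dot{\hat{X}} = \tD L_{\hat{X}}\Lambda(\hat{\xi},u) + \tD R_{\hat{X}}\Delta$ with $\hat{\xi} := \phi(\hat{X},\mr{\xi})$, the identity $\ddt\hat{X}^{-1} = -\tD L_{\hat{X}^{-1}}\tD R_{\hat{X}^{-1}}\dot{\hat{X}}$ yields
\begin{align*}
\ddt\hat{X}^{-1} = -\tD R_{\hat{X}^{-1}}\Lambda(\hat{\xi},u) - \tD L_{\hat{X}^{-1}}\Delta = \tD R_{\hat{X}^{-1}}\bigl(-\Lambda(\hat{\xi},u) - \Ad_{\hat{X}^{-1}}\Delta\bigr),
\end{align*}
where the second equality uses $\tD L_{\hat{X}^{-1}}\Delta = \tD R_{\hat{X}^{-1}}\tD R_{\hat{X}}\tD L_{\hat{X}^{-1}}\Delta = \tD R_{\hat{X}^{-1}}\Ad_{\hat{X}^{-1}}\Delta$ (since left and right translations commute). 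The lift condition gives $\dot{\xi} = \tD_X|_\id\phi_\xi(X)[\Lambda(\xi,u)]$ for the second term.

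The next step is to factor both contributions through the identity of $\grpG$. Using the right-action identity $\phi(\exp(sV)\hat{X}^{-1},\xi) = \phi(\hat{X}^{-1},\phi(\exp(sV),\xi))$ and differentiating at $s=0$ yields $\tD_X|_{\hat{X}^{-1}}\phi_\xi(X)[\tD R_{\hat{X}^{-1}} V] = \tD\phi_{\hat{X}^{-1}}\cdot\tD_X|_\id\phi_\xi(X)[V]$ for any $V\in\gothg$. Combined with the second term, this collapses $\dot e$ to
\begin{align*}
\dot{e} = \tD\phi_{\hat{X}^{-1}}\cdot \tD_X|_\id\phi_\xi(X)\bigl[\Lambda(\xi,u) - \Lambda(\hat{\xi},u) - \Ad_{\hat{X}^{-1}}\Delta\bigr].
\end{align*}
I would then apply the lift equivariance \eqref{eq:equivariant_lift}: since $\xi=\phi(\hat{X},e)$, $\hat{\xi}=\phi(\hat{X},\mr{\xi})$, and $u=\psi(\hat{X},\mr{u})$, one obtains $\Lambda(\xi,u)=\Ad_{\hat{X}^{-1}}\Lambda(e,\mr{u})$ and $\Lambda(\hat{\xi},u)=\Ad_{\hat{X}^{-1}}\Lambda(\mr{\xi},\mr{u})$, so the bracketed expression factors as $\Ad_{\hat{X}^{-1}}\bigl[\Lambda(e,\mr{u})-\Lambda(\mr{\xi},\mr{u})-\Delta\bigr]$.

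The main obstacle, and the key geometric step, is to prove the change-of-base identity $\tD\phi_{\hat{X}^{-1}}\cdot \tD_X|_\id\phi_\xi(X)\cdot \Ad_{\hat{X}^{-1}} = \tD_X|_\id\phi_e(X)$. I would establish this by picking $V\in\gothg$ and using the exponential intertwining $\exp(s\Ad_{\hat{X}^{-1}} V) = \hat{X}^{-1}\exp(sV)\hat{X}$ together with two applications of the right-action property to write $\phi(\hat{X}^{-1}\exp(sV)\hat{X},\xi) = \phi(\hat{X},\phi(\exp(sV),e))$. Differentiating at $s=0$ and then applying $\tD\phi_{\hat{X}^{-1}}$, which cancels $\tD\phi_{\hat{X}}$ by the chain rule because $\phi_{\hat{X}^{-1}}\circ\phi_{\hat{X}} = \id_{\calM}$, yields the identity. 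Substituting back gives $\dot{e} = \tD\phi_e\bigl(\Lambda(e,\mr{u})-\Lambda(\mr{\xi},\mr{u})-\Delta\bigr)$, which manifestly depends only on $e$, $\mr{u}$ and $\Delta$, as claimed.
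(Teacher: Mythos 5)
Your proof is correct, but it takes a genuinely different route from the paper's. The paper first passes to the lifted trajectory $X(t)$ on $\grpG$ with $\xi \equiv \phi(X,\mr{\xi})$ (the projection property \eqref{eq:lifted_system_projection}), forms the group error $E = X\hat{X}^{-1}$, derives its dynamics \eqref{eq:group_error_kinematics} via the equivariance of the lift, and only then projects down through $\dot{e} = \tD\phi_{\mr{\xi}}\dot{E} = \tD\phi_{e}\,\tD L_{E^{-1}}\dot{E}$. You instead stay on the homogeneous space throughout: you differentiate $e = \phi(\hat{X}^{-1},\xi)$ directly by the chain rule, use the lift condition \eqref{eq:lift_condition} only to write $\dot{\xi} = \tD_X|_\id\phi_\xi(X)[\Lambda(\xi,u)]$, pull $\Lambda(\xi,u)$ and $\Lambda(\hat{\xi},u)$ back to the origin with \eqref{eq:equivariant_lift} (your identities $\Lambda(\xi,u)=\Ad_{\hat{X}^{-1}}\Lambda(e,\mr{u})$ and $\Lambda(\hat{\xi},u)=\Ad_{\hat{X}^{-1}}\Lambda(\mr{\xi},\mr{u})$ are right, since $\xi=\phi(\hat{X},e)$ and $u=\psi(\hat{X},\mr{u})$), and close with the change-of-base identity $\tD\phi_{\hat{X}^{-1}}\cdot\tD_X|_\id\phi_\xi(X)\cdot\Ad_{\hat{X}^{-1}} = \tD_X|_\id\phi_e(X)$, which you prove correctly from $\exp(s\Ad_{\hat{X}^{-1}}V)=\hat{X}^{-1}\exp(sV)\hat{X}$ and two uses of the right-action property; the rewriting $\tD L_{\hat{X}^{-1}}\Delta = \tD R_{\hat{X}^{-1}}\Ad_{\hat{X}^{-1}}\Delta$ also checks out. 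What your route buys: it never invokes the existence of a lifted trajectory through $\xi(0)$ or its projection property, only the pointwise lift condition, so it is somewhat more self-contained. What the paper's route buys: the intermediate group-error dynamics for $E$, which are of independent use (they reappear in the IEKF specialisation), and a computation confined to the group where the $\Ad$ and translation bookkeeping is arguably simpler than your mixed manifold-group chain rule.
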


\begin{proof}
Let $X(t)$ be a solution to the lifted system \eqref{eq:lifted_system} satisfying $\phi(X(0), \mr{\xi}) = \xi(0)$.
Then $\xi \equiv \phi(X, \mr{\xi})$, and
\begin{align} \label{eq:error_state_group_rel}
    e = \phi(\hat{X}^{-1},\xi)
    = \phi(X \hat{X}^{-1}, \mr{\xi})
    = \phi(E, \mr{\xi}),
\end{align}
where $E := X \hat{X}^{-1}$.
Computing the dynamics of $E$, one has
\begin{align} \label{eq:group_error_kinematics}
    \dot{E} &= \tD R_{\hat{X}^{-1}} \tD L_X \Lambda(\phi(X, \mr{\xi}), u) \notag \\
    &\hspace{0.5cm} - \tD L_X \tD L_{\hat{X}^{-1}} \tD R_{\hat{X}^{-1}} \left( \tD L_{\hat{X}} \Lambda(\phi(\hat{X}, \mr{\xi}), u) + \tD R_{\hat{X}} \Delta \right), \notag \\
    &= \tD L_E \Ad_{\hat{X}} \left( \Lambda(\phi(X, \mr{\xi}), u) - \Lambda(\phi(\hat{X}, \mr{\xi}), u) \right) - \tD L_E \Delta, \notag \\
    &= \tD L_E \left( \Lambda(\phi(E, \mr{\xi}), \psi(\hat{X}^{-1}, u))
    - \Lambda(\mr{\xi}, \psi(\hat{X}^{-1}, u)) \right)
    - \tD L_E \Delta,
\end{align}
where the last line follows from the equivariance of the lift.
Recalling \eqref{eq:u_prime}, the dynamics of $e$ follow from \eqref{eq:error_state_group_rel} and \eqref{eq:group_error_kinematics},
\begin{align}
\dot{e}
&= \tD \phi_{\mr{\xi}} \dot{E}, \notag \\
&= \tD \phi_{\phi(E, \mr{\xi})} \tD L_{E^{-1}} \dot{E}, \notag \\
&= \tD \phi_{e} \left( \Lambda(\phi(E, \mr{\xi}), \mr{u}) - \Lambda(\mr{\xi}, \mr{u}) - \Delta \right), \notag \\
&= \tD \phi_{e} \left( \Lambda(e, \mr{u}) - \Lambda(\mr{\xi}, \mr{u}) - \Delta \right), \notag
\end{align}
as required.
\end{proof}


\subsection{Linearisation}


\subsubsection{Error Dynamics}

Let $e \in \calM$ denote the global state error \eqref{eq:state_error_dfn}, and let $\mr{u} \in \vecL$ denote the origin velocity \eqref{eq:u_prime}.
Fix a local coordinate chart $\lchart : \calU_{\mr{\xi}} \to \R^m$ where $\calU_{\mr{\xi}} \subset \calM$ is a neighbourhood of the fixed origin $\mr{\xi}$, and $\lchart(\mr{\xi}) = 0$.
Let $\varepsilon$ be the local coordinates of the state error,
\begin{align} \label{eq:linearised_state_error_dfn}
    \varepsilon = \lchart(e).
\end{align}
The EqF correction is designed by linearising the pre-observer error dynamics of $\varepsilon$ at zero; that is, the dynamics \eqref{eq:state_error_kinematics} with the correction term $\Delta \equiv 0$ set to zero.

\begin{lemma} \label{prop:epsilon_dynamics}
    Let $\lchart$ be local coordinates on $\calM$ in an open neighbourhood $\calU_{\mr{\xi}} \subset \calM$ around $\mr{\xi}$.
    Assume $e(t) \in \calU_{\mr{\xi}}$ for all time.
    The linearised pre-observer ($\Delta \equiv 0$) dynamics of $e(t)$ about $\varepsilon = 0$ are
    \begin{align}
        \dot{\varepsilon} &= \mr{A}_t \varepsilon
        + O(\vert \varepsilon \vert^2),
        \label{eq:linearised_state_error_dyn} \\
        \mr{A}_t &:= \tD_e |_{\mr{\xi}} \lchart(e)
        \cdot \tD_E |_\id \phi_{\mr{\xi}} (E)
        \cdot \tD_e |_{\mr{\xi}} \Lambda(e, \mr{u})
        \cdot \tD_\varepsilon |_0 \lchart^{-1}(\varepsilon) \label{eq:state_matrix_A_dfn}.
    \end{align}
\end{lemma}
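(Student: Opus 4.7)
The plan is to substitute the pre-observer error dynamics \eqref{eq:state_error_kinematics} (with $\Delta \equiv 0$) into the time derivative of $\varepsilon = \lchart(e)$, express the resulting vector field as a function of $\varepsilon$ via $e = \lchart^{-1}(\varepsilon)$, and Taylor expand around $\varepsilon = 0$. The key observation driving the computation is that the bracket $\Lambda(e,\mr{u}) - \Lambda(\mr{\xi},\mr{u})$ vanishes at $e = \mr{\xi}$, so when we differentiate a product at $\varepsilon = 0$ only the term in which the bracket is itself differentiated survives.

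First, I would apply the chain rule to $\varepsilon(t) = \lchart(e(t))$ to obtain
\begin{align*}
    \dot\varepsilon
    = \tD_e|_e \lchart(e)\cdot \dot e
    = \tD_e|_e \lchart(e)\cdot \tD_X|_\id \phi_e(X)\,
      \bigl[\Lambda(e,\mr{u}) - \Lambda(\mr{\xi},\mr{u})\bigr],
\end{align*}
where I have used the pre-observer form of \eqref{eq:state_error_kinematics}. Define
\begin{align*}
    F(\varepsilon) := \tD_e|_{\lchart^{-1}(\varepsilon)} \lchart(e)\cdot \tD_X|_\id \phi_{\lchart^{-1}(\varepsilon)}(X)\,
    \bigl[\Lambda(\lchart^{-1}(\varepsilon),\mr{u}) - \Lambda(\mr{\xi},\mr{u})\bigr],
\end{align*}
so that $\dot\varepsilon = F(\varepsilon)$ while $e(t) \in \calU_{\mr{\xi}}$. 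Then I would split $F = A(\varepsilon)\cdot B(\varepsilon)$ where $A(\varepsilon)$ denotes the first two factors (a linear map $\gothg \to \R^m$) and $B(\varepsilon) := \Lambda(\lchart^{-1}(\varepsilon),\mr{u}) - \Lambda(\mr{\xi},\mr{u}) \in \gothg$.

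Next, since $\lchart^{-1}(0) = \mr{\xi}$, one has $B(0) = 0$ and therefore $F(0) = 0$. By Taylor's theorem $F(\varepsilon) = \tD F(0)\,\varepsilon + O(|\varepsilon|^2)$, and the product rule gives $\tD F(0) = \tD A(0)\cdot B(0) + A(0)\cdot \tD B(0) = A(0)\cdot \tD B(0)$ because $B(0) = 0$. The factor $A(0)$ is exactly $\tD_e|_{\mr{\xi}} \lchart(e)\cdot \tD_E|_\id \phi_{\mr{\xi}}(E)$, matching the outer two factors of the claimed $\mr{A}_t$. A second chain-rule expansion of $B$ gives
\begin{align*}
    \tD B(0) = \tD_e|_{\mr{\xi}} \Lambda(e,\mr{u})\cdot \tD_\varepsilon|_0 \lchart^{-1}(\varepsilon),
\end{align*}
matching the inner two factors. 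Composing identifies $\tD F(0)$ with $\mr{A}_t$ from \eqref{eq:state_matrix_A_dfn} and yields the claimed expansion \eqref{eq:linearised_state_error_dyn}.

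The steps themselves are routine differential calculus; the only delicate point is the bookkeeping of base points and argument slots in $\tD \phi_e$ and $\tD \Lambda$, in particular recognising that the ``$\tD\phi_e$'' appearing in \eqref{eq:state_error_kinematics} denotes $\tD_X|_\id \phi_e(X) : \gothg \to \tT_e\calM$ (as is clear from the derivation of \eqref{eq:state_error_kinematics}) so that, after composition with the vanishing bracket, the product rule collapses to a single surviving term. Once that is unpacked, the identification of $\tD F(0)$ with $\mr{A}_t$ is immediate and the lemma follows.
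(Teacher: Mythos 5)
Your proposal is correct and follows essentially the same route as the paper: write the pre-observer dynamics in local coordinates, observe that $\Lambda(\lchart^{-1}(0),\mr{u}) - \Lambda(\mr{\xi},\mr{u}) = 0$ so that the product-rule expansion at $\varepsilon = 0$ collapses to the single term $\tD \lchart \cdot \tD_E|_\id \phi_{\mr{\xi}}(E) \cdot \tD_e|_{\mr{\xi}} \Lambda(e,\mr{u}) \cdot \tD_\varepsilon|_0 \lchart^{-1}(\varepsilon)$, which is $\mr{A}_t$. Your reading of $\tD\phi_e$ in \eqref{eq:state_error_kinematics} as $\tD_X|_\id \phi_e(X) : \gothg \to \tT_e\calM$ is also the one intended by the paper, so no gap remains.
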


\begin{proof}
The nonlinear pre-observer dynamics of the global state error \eqref{eq:state_error_kinematics} in local coordinates $\varepsilon = \lchart(e)$ are
\begin{align} \label{eq:vareps_nldyn}
    \dot{\varepsilon} &= \tD \lchart \cdot \tD  \phi_{\lchart^{-1}(\varepsilon)} (
    \Lambda(\lchart^{-1}(\varepsilon), \mr{u}) - \Lambda(\mr{\xi}, \mr{u})).
\end{align}
Clearly, $\Lambda(\lchart^{-1}(0), \mr{u}) = \Lambda(\mr{\xi}, \mr{u})$ since $\lchart^{-1}(0) = \mr{\xi}$.
Hence, linearising $\dot{\varepsilon}$ about $\varepsilon = 0$ yields
\begin{align}
    &\tD \lchart \cdot \tD  \phi_{\lchart^{-1}(\varepsilon)} (
    \Lambda(\lchart^{-1}(\varepsilon), \mr{u}) - \Lambda(\mr{\xi}, \mr{u})) \notag \\
    &= \tD \lchart \cdot \tD \phi_{\lchart^{-1}(0)}
    (\Lambda(\lchart^{-1}(0), \mr{u}) - \Lambda(\mr{\xi}, \mr{u}))
\notag \\ & \phantom{=}
    + \tD_\varepsilon |_{0}  \left( \tD \lchart \cdot \tD \phi_{\lchart^{-1}(\varepsilon)} (
    \Lambda(\lchart^{-1}(\varepsilon), \mr{u}) - \Lambda(\mr{\xi}, \mr{u}) \right) [\varepsilon]
    + O(\vert \varepsilon \vert^2), \label{eq:A_lin_1} \\
    &=
    \tD_\varepsilon |_{0} \left( \tD \lchart \cdot \tD \phi_{\lchart^{-1}(\varepsilon)} \right)[\varepsilon] \cdot \left(
    \Lambda(\lchart^{-1}(0), \mr{u}) - \Lambda(\mr{\xi}, \mr{u}) \right)
\notag \\ & \phantom{=}
    + \tD \lchart \cdot \tD \phi_{\lchart^{-1}(0)} \cdot \tD_\varepsilon |_{0}  \left(
    \Lambda(\lchart^{-1}(\varepsilon), \mr{u}) - \Lambda(\mr{\xi}, \mr{u}) \right) [\varepsilon]
    + O(\vert \varepsilon \vert^2), \label{eq:A_lin_2} \\
    &= \tD \lchart
    \cdot \tD \phi_{\mr{\xi}}
    \cdot \tD_e |_{\mr{\xi}} \Lambda(e, \mr{u})
    \cdot \tD \lchart^{-1} [\varepsilon]
    + O(\vert \varepsilon \vert^2), \label{eq:A_lin_3} \\
    &= \mr{A}_t \varepsilon
    + O(\vert \varepsilon \vert^2). \notag
\end{align}
Here \eqref{eq:A_lin_1} is the first order Taylor expansion in local coordinates about $\varepsilon = 0$.
Equation \eqref{eq:A_lin_2} follows since $\Lambda(\lchart^{-1}(0), \mr{u}) - \Lambda(\mr{\xi}, \mr{u}) = 0$ and by expanding the first-order term using the product rule.
Equation \eqref{eq:A_lin_3} is the result of applying $\Lambda(\lchart^{-1}(0), \mr{u}) - \Lambda(\mr{\xi}, \mr{u}) = 0$ to eliminate the first term, while noting that $\tD_\varepsilon |_{0} \Lambda(\mr{\xi}, \mr{u}) \equiv 0$ and applying the chain rule to simplify the second term in \eqref{eq:A_lin_2}.
The final line follows from \eqref{eq:state_matrix_A_dfn}.
\end{proof}

The primary role of the error dynamics linearisation is in the covariance propagation instantiated in the Ricatti equation \eqref{eq:eqf_riccati} where the covariance is propagated along the pre-observer trajectories \cite{2021_mahony_arcar}.
Barrau and Bonnabel \cite{2017_Barrau_tac} showed that,
if the manifold $\calM = \grpG$, the origin is chosen $\mr{\xi} = \id$, the local coordinates are chosen $\lchart(e) := \log(e)$, and the system dynamics are `group affine' (cf.~Appendix \ref{sec:iekf_specialisation}), then the pre-observer dynamics are exact,
$\dot{\varepsilon} = \mr{A}_t \varepsilon$.
This \emph{exact} linearisation of the pre-observer error dynamics removes the $O(\vert \varepsilon \vert^2)$ linearisation error and significantly improves the performance of the filter by reducing linearisation error in the Ricatti equation.
Even without the group affine property, the equivariant structure of the error dynamics provides significant advantages over the standard EKF.
The error dynamics are linearised at a single point $\mr{\xi}$ and using a single coordinate chart $\lchart$, which can be designed intentionally to minimise the $O(\vert \varepsilon \vert^2)$ linearisation error.
As we show in Lemma \ref{lem:equivariant_output}, it is also possible to exploit the equivariant system structure to reduce linearisation error in the output approximation, further improving the filter performance.


\subsubsection{System Output}

Consider the state error $e$ \eqref{eq:state_error_dfn}, and let $\xi \in \calM$ and $\hat{X} \in \grpG$ denote the true system state and observer state respectively.
Let $\varepsilon \in \R^m$ represent local coordinates for $e$ as in \eqref{eq:linearised_state_error_dfn}.
The output $y = h(\xi)$ can be written
\begin{align} \label{eq:h_in_error_coords}
    h(\xi) = h(\phi(\hat{X}, e)) = h(\phi_{\hat{X}}(\lchart^{-1}(\varepsilon))).
\end{align}
Note that substituting $\varepsilon = 0$ gives
\begin{align*}
h(\phi_{\hat{X}}(\lchart^{-1}(0))) & =
h(\phi_{\hat{X}}(\mr{\xi}))
= h(\hat{\xi}).
\end{align*}
In common with error state Kalman filters, the output $y = y(\varepsilon)$ is considered as a function of the error \eqref{eq:h_in_error_coords} while the predicted output $\hat{y} = h(\hat{\xi})$ is considered an independent signal.
Define the output residual
\begin{align}
    \tilde{y} &= y(\varepsilon) - \hat{y}.
\label{eq:y_tilde}
\end{align}
Linearising $\tilde{y}$ as a function of $\varepsilon \in \R^m$ around $\varepsilon = 0$ yields
\begin{align}
\tilde{y} & = C_t \varepsilon + O(\vert \varepsilon \vert^2), \label{eq:linearised_output_error_dyn_NEW} \\
C_t &:= \tD_\xi |_{\hat{\xi}} h (\xi)
\cdot \tD_e |_{\mr{\xi}} \phi_{\hat{X}} (e)
\cdot \tD_\varepsilon |_0 \lchart^{-1}(\varepsilon).
\label{eq:output_matrix_C_dfn}
\end{align}
The matrix $C_t \in \R^{n \times m}$, the Jacobian of \eqref{eq:h_in_error_coords}, is termed the \emph{standard output matrix}.

\subsubsection{Equivariant Output Linearisation}

When the system has output equivariance, the linearisation of the output function can be improved to obtain $O(\vert \varepsilon \vert^3)$ error.
A filter's ability to incorporate information from measurements correctly is fundamental to its robustness and transient performance.
As shown in the simulation results in \S\ref{sec:sphere_example}, the equivariant output linearisation presented below greatly improves the EqF performance both with and without the presence of noise in the measurement signals.

\begin{lemma} \label{lem:equivariant_output}
Suppose the local coordinates $\lchart : \calU_{\mr{\xi}} \subset \calM \to \R^m$ are normal coordinates of the Lie group about $\mr{\xi}$.
Then
\begin{align}
    \tilde{y} &= C^\star_t \varepsilon + O(\vert \varepsilon \vert^3), \\
    C^\star_t \varepsilon &= \frac{1}{2} \left(\tD_{E | \id} \rho(E, y) + \tD_{E | \id} \rho(E, \hat{y}) \right) \Ad_{\hat{X}^{-1}} \varepsilon^\wedge,
    \label{eq:equivariant_output_matrix}
\end{align}
where $\cdot^\wedge : \R^m \to \gothm \subset \gothg$ is the identification of $\gothm$ with $\R^m$ used in defining $\lchart$.
\end{lemma}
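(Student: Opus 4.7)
The plan is to exploit output equivariance to rewrite $y-\hat{y}$ as the increment of a curve in $\calN$ driven by a single Lie algebra element $v := \Ad_{\hat{X}^{-1}}\varepsilon^\wedge$, and then apply a trapezoidal-rule identity (two Taylor expansions combined) to pick up the midpoint-averaged Jacobian appearing in $C^\star_t$ at the cost of a third-order remainder.

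First, I would rewrite $y$ in pure group-action form. Normal coordinates give $e = \phi(\exp(\varepsilon^\wedge),\mr{\xi})$, so that
\begin{align*}
\xi &= \phi(\hat{X},e) = \phi(\exp(\varepsilon^\wedge)\hat{X},\mr{\xi}) \\
&= \phi(\hat{X}\exp(\Ad_{\hat{X}^{-1}}\varepsilon^\wedge),\mr{\xi}) = \phi(\exp(v),\hat{\xi}),
\end{align*}
using right-action composition together with the conjugation identity $\exp(U)\hat{X} = \hat{X}\exp(\Ad_{\hat{X}^{-1}}U)$. Output equivariance then yields $y = \rho(\exp(v),\hat{y})$, while trivially $\hat{y} = \rho(\id,\hat{y})$.

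Next, define the smooth curve $\gamma:[0,1]\to\calN$ by $\gamma(t):=\rho(\exp(tv),\hat{y})$, so that $\gamma(0)=\hat{y}$ and $\gamma(1)=y$. The right-action identity $\gamma(t+s) = \rho(\exp(sv),\gamma(t))$ gives $\gamma'(t) = \tD_{E|\id}\rho(E,\gamma(t))[v]$; in particular $\gamma'(0)$ and $\gamma'(1)$ are precisely the two Jacobians entering the definition of $C^\star_t$. Combining the Taylor expansions $\gamma(1) = \gamma(0)+\gamma'(0)+\tfrac{1}{2}\gamma''(0)+\tfrac{1}{6}\gamma'''(s_1)$ and $\gamma'(1) = \gamma'(0)+\gamma''(0)+\tfrac{1}{2}\gamma'''(s_2)$ so as to eliminate $\gamma''(0)$ produces
\begin{align*}
y - \hat{y} = \tfrac{1}{2}\bigl(\gamma'(0)+\gamma'(1)\bigr) + \tfrac{1}{6}\gamma'''(s_1) - \tfrac{1}{4}\gamma'''(s_2),
\end{align*}
which, after inserting the expressions for $\gamma'(0)$, $\gamma'(1)$ and unwinding $v = \Ad_{\hat{X}^{-1}}\varepsilon^\wedge$, reproduces the proposed $C^\star_t\varepsilon$ plus the residual on the right.

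The main obstacle is verifying that this residual is honestly $O(|\varepsilon|^3)$. I would handle this by noting that $\gamma$ is the integral curve from $\hat{y}$ of the fundamental vector field on $\calN$ generated by $v$ through $\rho$, namely $\xi \mapsto \tD_{E|\id}\rho(E,\xi)[v]$, which depends linearly on $v$. Successive Lie differentiation of this vector field along its own flow then shows that $\gamma^{(k)}(t)$ is polynomial in the components of $v$ of degree exactly $k$; in particular $\gamma'''(t) = O(|v|^3)$ uniformly on $[0,1]$, and since $|v| = O(|\varepsilon|)$ the bound follows. The key subtlety, and the reason normal coordinates and the $\exp$-parameterisation are essential, is that the generator $v$ is held constant along the entire path: a generic $\grpG$-path from $\id$ to $\exp(v)$ would not permit this clean scaling of the higher derivatives.
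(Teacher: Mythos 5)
Your proof is correct and is essentially the paper's own argument: the paper likewise uses output equivariance plus the conjugation identity to write $\tilde{y}(\varepsilon) = \rho(\exp(\Ad_{\hat{X}^{-1}}\varepsilon^\wedge), \hat{y}) - \hat{y}$, evaluates the differential along the direction $\varepsilon$ both at $0$ and at $\varepsilon$ (your $\gamma'(0)$ and $\gamma'(1)$, the latter expressed through the measured $y$), and cancels the second-order Taylor term to obtain the averaged Jacobian with $O(\vert \varepsilon \vert^3)$ remainder. Your one-parameter-curve/trapezoidal-rule packaging, and the fundamental-vector-field argument that $\gamma'''(t) = O(\vert v \vert^3)$, are just a tidier organisation of the same remainder bookkeeping the paper carries out with Taylor expansions of $\tilde{y}$ and of its differential.
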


\begin{proof}
By construction $\lchart^{-1}(\varepsilon) =  \phi_{\mr{\xi}} (\exp( \varepsilon^\wedge)) = e$.
Recalling \eqref{eq:h_in_error_coords} and \eqref{eq:y_tilde} one has
\begin{align*}
    \tilde{y}(\varepsilon; \hat{X})
    &= h(\phi(\hat{X}, \phi(\exp(\varepsilon^\wedge), \mr{\xi}))) - h(\phi(\hat{X}, \mr{\xi})).
\end{align*}
Clearly, $\tilde{y}(0; \hat{X}) = 0$.
Using the equivariance of $h$, one has
\begin{align*}
    h(\phi(\hat{X}, \phi(\exp( \varepsilon^\wedge),\mr{\xi})))
    & = h(\phi(\exp(\varepsilon^\wedge) \hat{X}, \mr{\xi})), \\
    &= h(\phi(\hat{X} \hat{X}^{-1} \exp(\varepsilon^\wedge) \hat{X}, \mr{\xi})), \\
    &= h(\phi(\exp(\Ad_{\hat{X}^{-1}} (\varepsilon^\wedge)) ,\phi_{\hat{X}}(\mr{\xi}))), \\
    &= \rho(\exp(\Ad_{\hat{X}^{-1}} (\varepsilon^\wedge)), h(\hat{\xi})).
\end{align*}
Setting $\hat{y} = h(\hat{\xi})$ and differentiating $\tilde{y}$ at $\varepsilon = 0$ in a direction $\gamma \in \R^m$ yields
\begin{align*}
    \tD_{x | 0} \tilde{y}(x; \hat{X}) [\gamma]
    &= \tD_{E | \id} \rho_{\hat{y}}(E) \Ad_{\hat{X}^{-1}} \gamma^\wedge.
\end{align*}
Although this formula allows for an arbitrary $\gamma \in \R^m$, the linearisation is computed for $\gamma = \varepsilon$.

The fact that the linearisation is computed in the same direction $\varepsilon$ as the coordinates of the error can be exploited along with equivariance to obtain a second linearisation point.
In particular, we will compute the differential of $y$ at $\varepsilon = \lchart(e)$ \emph{in direction} $\varepsilon$.
Note that $\phi_{\mr{\xi}}(\exp(\varepsilon^\wedge)) = \phi(\hat{X}^{-1}, \xi)$, and therefore $\phi_{\mr{\xi}}(\exp(\varepsilon^\wedge) \hat{X}) = \xi$.
Then
\begin{align}
    \tD_{x | \varepsilon} \tilde{y}(x; \hat{X}) [\varepsilon]
    &= \left. \ddt \right\vert_{t=0} h(\phi(\exp((1+t)\varepsilon^\wedge) \hat{X}, \mr{\xi})), \notag\\
    &= \left. \ddt \right\vert_{t=0} h(\phi(\exp(\varepsilon^\wedge) \exp(t \varepsilon^\wedge) \hat{X}, \mr{\xi})), \notag\\
    &= \left. \ddt \right\vert_{t=0} h(\phi(\exp(\varepsilon^\wedge) \hat{X} \hat{X}^{-1} \exp(t \varepsilon^\wedge) \hat{X}, \mr{\xi})), \notag\\
    &= \left. \ddt \right\vert_{t=0} h(\phi(\hat{X}^{-1} \exp(t \varepsilon^\wedge) \hat{X}, \xi)), \notag\\
    &= \left.\ddt \right\vert_{t=0} \rho(\exp(t \Ad_{\hat{X}^{-1}} \varepsilon^\wedge), h(\xi)), \notag\\
    &= \tD_{E | \id} \rho(E, y) \Ad_{\hat{X}^{-1}} \varepsilon^\wedge .\notag
\end{align}
Although the differential is posed at the unknown error state $\varepsilon$, it is evaluated using only the known measurement data $y$.

Consider the Taylor expansion of the differential $\tD_{x | \varepsilon} \tilde{y}(x; \hat{X})$ with respect to $\varepsilon$ around $\varepsilon = 0$:
\begin{align*}
    \tD_{x | \varepsilon} \tilde{y}(x; \hat{X}) [\cdot]
    = \tD_{x | 0} \tilde{y}(x; \hat{X})[\cdot] + \tD^2_{x | 0} \tilde{y}(x; \hat{X}) [\varepsilon, \cdot] + O(\vert \varepsilon \vert^2),
\end{align*}
and hence
\begin{align*}
    \tD^2_{x | 0} \tilde{y}(x; \hat{X}) [\varepsilon, \varepsilon]
    = \tD_{x | \varepsilon} \tilde{y}(x; \hat{X}) [\varepsilon]
    - \tD_{x | 0} \tilde{y}(x; \hat{X})[\varepsilon]
    + O(\vert \varepsilon \vert^3).
\end{align*}

The result \eqref{eq:equivariant_output_matrix} follows from taking the Taylor expansion of $\tilde{y}$ with respect to $\varepsilon$ and substituting
\begin{align*}
    \tilde{y}(\varepsilon; \hat{X})
    &= \tilde{y}(0; \hat{X})
    + \tD_{x | 0} \tilde{y}(x; \hat{X}) [\varepsilon]
    + \frac{1}{2}\tD^2_{x | 0} \tilde{y}(x; \hat{X}) [\varepsilon, \varepsilon]
    + O(\vert \varepsilon \vert^3), \\
    &= \frac{1}{2}(\tD_{x | \varepsilon} \tilde{y}(x; \hat{X}) [\varepsilon]
    + \tD_{x | 0} \tilde{y}(x; \hat{X})[\varepsilon])
    + O(\vert \varepsilon \vert^3), \\
    &= \frac{1}{2}(\tD_{E | \id} \rho(E, y)
    + \tD_{E | \id} \rho(E, \hat{y}) ) \Ad_{\hat{X}^{-1}} \varepsilon^\wedge
    + O(\vert \varepsilon \vert^3).
\end{align*}
\end{proof}

\section{Equivariant Filter (EqF)}
\label{sec:eqf_definition}

Consider a kinematic system \eqref{eq:kinematic_system_function} with a state symmetry $\phi: \grpG \times \calM \to \calM$.
Assume the system is equivariant with input symmetry $\psi: \grpG \times \vecL \to \vecL$ and has an equivariant lift $\Lambda : \calM \times \vecL \to \gothg$.
Let $\xi \in \calM$ denote the true state of the system, with trajectory determined by the measured input $u \in \vecL$.
Denote the configuration output $y = h(\xi)$.

We construct the Equivariant Filter (EqF) as follows.
Let $\hat{X} \in \grpG$ denote the observer state.
Pick an arbitrary fixed origin $\mr{\xi} \in \calM$.
For a general output map, set $C_t$ to be the standard output matrix defined in \eqref{eq:linearised_output_error_dyn_NEW}.
If the system has output equivariance then set
$C_t = C^\star_t$ to be the equivariant output matrix as defined in \eqref{eq:equivariant_output_matrix}.
In this case the resulting algorithm is termed the EqF$^\star$. 
Let $\mr{A}_t$ denote the state matrix as defined in \eqref{eq:state_matrix_A_dfn}.
Choose an initial value for the Riccati term $\Sigma_0 \in \Sym_+(m)$, where $\Sym_+(m)$ is the set of positive-definite symmetric $m \times m$ matrices, and pick a state gain matrix $M_t \in \Sym_+(m)$ and an output gain matrix $N_t \in \Sym_+(n)$.
Choose a right inverse $\tD_E |_\id \phi_{\mr{\xi}}(E)^\dag$ of $\tD_E |_\id \phi_{\mr{\xi}}(E)$; that is, $\tD_E |_\id \phi_{\mr{\xi}}(E) \cdot \tD_E |_\id \phi_{\mr{\xi}}(E)^\dag = \id$.

The proposed equivariant filter is given by the solution of
\begin{align}
    \dot{\hat{X}} &= \tD L_{\hat{X}} \Lambda(\phi(\hat{X}, \mr{\xi}), u) + \tD R_{\hat{X}} \Delta, \quad \hat{X}(0) = \id, \label{eq:eqf_group_observer} \\
    \Delta &= \tD_E |_\id \phi_{\mr{\xi}}(E)^\dag \tD \lchart^{-1} \Sigma C_t^\top N_t^{-1} (y - h(\phi(\hat{X}, \mr{\xi}))), \label{eq:eqf_delta_correction} \\
    \dot{\Sigma} &= \mr{A}_t \Sigma + \Sigma {\mr{A}_t}^\top + M_t - \Sigma C_t^\top N_t^{-1} C_t \Sigma, \quad \Sigma(0) = \Sigma_0\label{eq:eqf_riccati}.
\end{align}

If the pair $(\mr{A}_t, C_t)$ is uniformly observable in the sense of Proposition 1 of \cite{2017-Morin-geometric}, then $\Sigma(t)$ is bounded above and below, and the Riccati equation \eqref{eq:eqf_riccati} is well-defined for all time \cite{2001_deylon_tac}.

Provided that the error trajectory $\varepsilon(t)$ remains well-defined for all time $t \geq 0$, \cite[Theorem 1.1.1]{2003_krener_ekf} provides sufficient conditions for the convergence of $\varepsilon \to 0$.
In particular, if the error system is uniformly observable, the second derivative of the error dynamics is bounded, and the initial error is sufficiently small, then the error $\varepsilon$ and the Lyapunov function
\begin{align} \label{eq:linearised_lyapunov_func}
    \Lyap(t) := \varepsilon^\top \Sigma^{-1} \varepsilon,
\end{align}
converge exponentially to zero as $t \to \infty$ \cite{2003_krener_ekf}.

\begin{remark}
Recent work by the authors \cite{2013_zamani_tac,2015_saccon_tac,2021_mahony_arcar} showed that the Ricatti equation \eqref{eq:eqf_riccati} can be augmented by a curvature modification that compensates for the reset process, where the linearisation point is continually translated to track the observer state,  in the extended Kalman filter derivation \cite{2021_mahony_arcar}.
Several works have shown that, during the transient at least, an appropriate curvature modification term can improve filter performance \cite{2015_zamani_arxiv,2021_mahony_arcar}.
Curvature is directly connected to parallel transport on a manifold and is an additional structure that can be chosen independently from the homogeneous space structure.
One possible choice is to define the normal coordinates to be flat, that is, parallel transport on the manifold is just translation in local coordinates.
The Ricatti equation \eqref{eq:eqf_riccati} corresponds to this choice.
Such a choice has the advantage of simplicity; however, the associated affine connection will usually have non-zero torsion.
The relative benefit or consequence of choosing different geometries, with different curvatures, along with symmetry or torsion of the associated connections remains an open question in equivariant systems theory.
\end{remark}

\subsection{EqF Gain Tuning}
\label{sec:gain_tuning}

The choice of gain matrices $\Sigma_0$, $M_t$ and $N_t$ can greatly influence the performance of the EqF.
In the context of a Kalman-Bucy filter, $\Sigma_0$ reflects the uncertainty in the initial state estimate, and $M_t$ and $N_t$ are optimally chosen to be the intensities (covariances) of zero-mean Gaussian noise terms added to the filter dynamics and output, respectively.
Similar choices can be made for the EqF.
The initial value of the Riccati term $\Sigma_0$ can be chosen to reflect the uncertainty in the initial state estimate as expressed in the chosen local coordinates.
Suppose the measured velocity $\td u_m = \td u + \td \mu_u$ and the measured output $\td y_m = \td y + \td \nu_y$, where $\td \mu_u \sim \mathbf{W}(0, M_t^m)$ and $\td \nu_y \sim \mathbf{W}(0, N_t^m)$ are Wiener processes.
Then, re-linearising the pre-observer error dynamics \eqref{eq:state_error_kinematics} with $\Delta \equiv 0$ and output \eqref{eq:h_in_error_coords} yields
\begin{align}
    \td \varepsilon &= \mr{A}_t \varepsilon \td t + B_t \td \mu_u, \\
    \td \tilde{y} &= C_t \varepsilon \td t + \td \nu_y,
\end{align}
where the \emph{input matrix} $B_t$ is obtained by linearising the pre-observer error dynamics with respect to a perturbation of the measured input,
\begin{align}
    B_t &:= \tD_e |_{\mr{\xi}} \lchart(e)
    \cdot \tD_E |_\id \phi_{\mr{\xi}} (E)
    \cdot \Ad_{\hat{X}}
    \cdot \tD_u |_{u_m} \Lambda(\hat{\xi}, u), \label{eq:B_matrix_dfn} \\
    &= \tD_e |_{\mr{\xi}} \lchart(e)
    \cdot \tD_E |_\id \phi_{\mr{\xi}} (E)
    \cdot \tD_w |_{\mr{u}_m} \Lambda(\mr{\xi}, w)
    \cdot \psi_{\hat{X}^{-1}}.
    \notag
\end{align}
Based on this formulation, the EqF gain matrices can be chosen by
\begin{align}
    M_t &= M_\varepsilon + B_t M_t^m B_t^\top, \\
    N_t &= N_\varepsilon + N_t^m.
\end{align}
The matrices $M_\varepsilon \in \Sym_+(m)$ and $N_\varepsilon \in \Sym_+(n)$ are optimally set to zero in the case of a linear system, but can otherwise be used by the practitioner to model the error introduced to the dynamics and output by linearisation.

\section{Example Revisited: Single Bearing Estimation}
\label{sec:sphere_example}

\newhl{
The system described in Section \ref{sec:motivating_example} of bearing estimation on the sphere provides an illustrative example of a system on a homogeneous space where the EqF design methodology may be applied.
For additional examples of EqF applications, we refer the reader to \cite{2021_goor_EquivariantFilterVisual,2020_vangoor_cdc,2021_mahony_arcar}.
}

\subsection{Equivariant System}

Here we describe the design preliminaries for the EqF and EqF$^\star$ for the single bearing estimation problem following Algorithm \ref{alg:eqf_design_setup} as described in Appendix \ref{sec:filter_implementation}.

\subsubsection{State Symmetry}
Consider the Lie group of 3D rotations
\[
    \SO(3) = \{ R \in \R^{3 \times 3} \; | \; R^\top R = I_3, \; \det(R) = 1 \}.
\]
This group has a right action on the sphere $\phi : \SO(3) \times \Sph^2 \to \Sph^2$ given by
\begin{align}
    \phi(R, \eta) := R^\top \eta.
\end{align}

\subsubsection{System Equivariance}
Define the map $\psi : \SO(3) \times \R^3 \to \R^3$ to be
\begin{align}
    \psi(R, \Omega) := R^\top \Omega.
\end{align}
Then the system \eqref{eq:ex_sphere_dynamics} is equivariant with respect to the state action $\phi$ and input action $\psi$.
To see this, let $\Omega \in \R^3$, $\eta \in \Sph^2$, and $R \in \SO(3)$ be arbitrary, and compute
\begin{align*}
    \Phi(R, f_{\Omega}) (\eta)
    &= \tD \phi_R (-\Omega^\times R \eta), \\
    &= -R^\top \Omega^\times R \eta, \\
    &= - (R^\top \Omega)^\times \eta, \\
    &= f_{\psi(R, \Omega)}(\eta).
\end{align*}

\subsubsection{Equivariant Lift}
The Lie algebra $\so(3)$ of $\SO(3)$ can be written as the subspace of skew-symmetric matrices,
\begin{align*}
    \so(3) &:= \{ \omega \in R^{3 \times 3} \; | \; \omega^\top = -\omega \}.
\end{align*}
Define the candidate lift function $\Lambda : \Sph^2 \times \R^3 \to \so(3)$ by
\begin{align}
    \Lambda(\eta, \Omega) := \Omega^\times
\end{align}
To check that it is indeed a lift, evaluate the lift condition \eqref{eq:lift_condition}.
\begin{align*}
    \tD_R |_\id \phi_\eta(R)\left[ \Lambda(\eta, \Omega) \right]
    &= \tD_R |_\id \phi_\eta(R)\left[ \Omega^\times \right], \\
    &= (\Omega^\times)^\top \eta, \\
    &= - \Omega^\times \eta, \\
    &= f_{\Omega}(\eta),
\end{align*}
as required.
Next, check the equivariance of $\Lambda$ as in \eqref{eq:equivariant_lift}.
\begin{align*}
    \Lambda(\phi(R, \eta), \psi(R, \Omega))
    &= \Lambda(R^\top \eta, R^\top \Omega), \\
    &= (R^\top \Omega)^\times, \\
    &= R^\top \Omega^\times R, \\
    &= \Ad_{R^{-1}} \Lambda(\eta, \Omega),
\end{align*}
as required.

\subsubsection{Output Equivariance}
The action $\rho : \SO(3) \times \R^3 \to \R^3$ given by
\begin{align}
    \rho(R, y) = R^\top y,
\end{align}
ensures the system \eqref{eq:ex_sphere_output} has output equivariance since
\begin{align*}
    \rho(R, h(\eta))
    = R^\top (c_m \eta)
    = c_m R^\top \eta
    = h(\phi(R, \eta)).
\end{align*}

\subsubsection{Origin and State Error}
Fix the origin element $\mr{\eta} = \eb_1 \in \Sph^2$, and let $\hat{R} \in \SO(3)$ denote the observer state.
The global state error is given by
\begin{align*}
    e = \phi(\hat{R}^{-1}, \eta) = \hat{R} \eta
\end{align*}
where $\eta \in \Sph^2$ is the true state of the system.

Since the system exhibits output equivariance, we choose to use normal coordinates.
Explicitly, define
\begin{align*}
    \gothm &:= \{ v^\wedge \in \R^{3 \times 3} \; | \; v \in \R^2 \} \subset \so(3), \\
  (v_2, v_3)^\wedge &:= (0, v_2, v_3)^\times,
\end{align*}
where the indices $(v_2,v_3) \in \R^2$ are chosen to correspond to the associated indices for the embedding into $\so(3)$.
Then the normal coordinates for $\Sph^2$ about $\eb_1$ are given by
\begin{align}
    \lchart(e) &:= -\text{atan2}(\vert \eb_1 \times e \vert, \eb_1^\top e)  \begin{pmatrix}
        0_{2 \times 1} & I_2
    \end{pmatrix}  \frac{\eb_1 \times e}{\vert \eb_1 \times e \vert}, \label{eq:sphere_normal_coordinates} \\
    \lchart^{-1}(\varepsilon) &:= \phi(\exp(\varepsilon^\wedge), \eb_1). \notag
\end{align}

In this system, $\tD_E |_\id \phi_{\eb_1}(E)$ is not invertible,
\begin{align*}
    \tD_{E |_\id} \phi_{\eb_1}(E)[\omega^\times]
    &= \left. \ddt \right\vert_{t=0} \phi(\exp(t \omega^\times), \eb_1), \\
    &= - \omega^\times \eb_1, \\
    &= \eb_1^\times \omega.
\end{align*}
We propose the following right inverse (required in
\eqref{eq:eqf_delta_correction}),
\begin{align*}
    \tD_E |_\id \phi_{\eb_1}(E)^\dagger [u]
    := (u_3, -u_2)^\wedge
    \in \gothm,
\end{align*}
where $u = (0, u_2, u_3)^\top \in \tT_{\eb_1} \Sph^2$ is an arbitrary tangent vector in the embedded coordinates for $\Sph^2$ at $\eb_1$.
To see that this indeed defines a right-inverse, compute
\begin{align*}
    \tD_{E |_\id} \phi_{\eb_1}(E) \tD_E |_\id \phi_{\eb_1}(E)^\dagger [u]
    &= \eb_1^\times \begin{pmatrix}
        0 & u_3 & -u_2
    \end{pmatrix}^\top, \\
    &= \begin{pmatrix}
        0 & u_2 & u_3
    \end{pmatrix}^\top, \\
    &= u.
\end{align*}

\subsubsection{EqF Matrices}
The EqF matrices are obtained by specialising the general matrix formulas to the specific example.
The state matrix $\mr{A}_t$ \eqref{eq:linearised_state_error_dyn}, the input matrix $B_t$ \eqref{eq:B_matrix_dfn}, the standard output matrix $C_t$ \eqref{eq:output_matrix_C_dfn}, and the equivariant output matrix $C_t = C^\star_t$ \eqref{eq:equivariant_output_matrix} are given by
\begin{align*}
    \mr{A}_t &= 0_{2 \times 2}, &
    B_t &= \begin{pmatrix}
        0_{2 \times 1} & I_2
    \end{pmatrix} \hat{R} , \\
    C_t &= \hat{y}^\times \hat{R}^\top \begin{pmatrix}
        0_{1 \times 2} \\ I_2
    \end{pmatrix}, &
    C^\star_t &= \frac{1}{2}\left( y^\times + \hat{y}^\times \right) \hat{R}^\top \begin{pmatrix}
        0_{1 \times 2} \\ I_2
    \end{pmatrix}.
\end{align*}
\newhl{
The state matrix $\mr{A}_t$ and output matrix $C_t$ can be compared to the EKF matrices \eqref{eq:ex_sphere_dynamics} and \eqref{eq:EKF_C1_t}, respectively, derived in \S\ref{sec:motivating_example}.
}

\subsection{EqF Implementation}

We implement the EqF equations (\ref{eq:eqf_group_observer}-\ref{eq:eqf_riccati}) as detailed in
Algorithm \ref{alg:eqf_design_methodology} in Appendix \ref{sec:filter_implementation}.

The observer dynamics are given by specialising \eqref{eq:eqf_group_observer},
\begin{align*}
    \dot{\hat{R}}
    &= \tD L_{\hat{R}} \Lambda( \phi(\hat{R}, \eb_1), \Omega)
    + \tD R_{\hat{R}} \Delta, \\
    &= \hat{R} \Omega^\times
    + \Delta \hat{R},
\end{align*}
where the correction term $\Delta$ is computed according to \eqref{eq:eqf_delta_correction}.

\subsection{Simulation Results}

To verify the observer design for this example, we performed a simulation of a robot rotating with an angular velocity \newhl{ $\Omega(t) = (0.1\cos(2 t), 0.2\sin(t), -0.1 \cos(1.5 t))$~rad/s}, where $t$ is the simulation time in seconds.
The initial state, the measured angular velocity, and the measured output were chosen by
\begin{align}
    \eta(0) &= \frac{\eb_1 + \mu_0}{\vert \eb_1 + \mu_0 \vert}, &
    \mu_0 &\sim N(0, 2.0^2 I_3), \notag \\
    \Omega_m &= \Omega + \mu_u, &
    \mu_u &\sim N(0, 0.01^2 I_3), \notag \\
    y_m &= \eta + \nu_y, &
    \nu_y &\sim N(0, 0.05^2 I_3),
    \label{eq:example_noise}
\end{align}
respectively.
The state $\eta(t)$ was then computed by integrating
\[
    \ddt \eta(t) = f_{\Omega(t)}(\eta(t)) = - \Omega(t)^\times \eta(t).
\]
The EqF gain matrices were chosen according to the procedure outlined in \S\ref{sec:gain_tuning}.

We also implemented an extended Kalman filter (EKF) as described in \S\ref{sec:motivating_example} to compare its performance to that of the EqF.
The system (\ref{eq:ex_sphere_dynamics},\ref{eq:ex_sphere_output}), the EqF equations (\ref{eq:eqf_group_observer}-\ref{eq:eqf_riccati}), and the EKF were all implemented in python3 and integrated for $5.0$~s using Euler integration with a time step of $0.01$~s.
Both the EqF and EKF were given the initial estimate $\hat{\eta}(0) = \eb_1$.

In order to verify the local exponential convergence of the proposed filters in the absence of noise, we performed a simulation with the gyroscope and magnetometer noise set to zero, that is, $\mu_u = 0, \nu_y = 0$.
\newhl{
Figure \ref{fig:noiseless_results} shows the absolute angle $\tilde{\theta}$ between the estimated direction and true direction and the Lyapunov value \eqref{eq:linearised_lyapunov_func} for each filter, where
\begin{align}
    \label{eq:absolute_angle_error}
    \tilde{\theta} := \arccos(\vert \hat{\eta}^\top \eta \vert).
\end{align}
The results demonstrate the performance of each filter under ideal conditions where the only error is due to the difference between the initial state $\eta(0)$ and the initial estimate $\eb_1$.
That is, the true initial bearing was drawn from the distribution described in \eqref{eq:example_noise}, but the measurements were taken to be $\Omega_m = \Omega$ and $y_m = y$ exactly.
It is clear to see that the EqF is locally exponentially convergent, and that the EqF$^\star$ exhibits faster initial convergence.}

\begin{figure}[!htb]
    \centering
    \includegraphics[trim={0 0.2cm 0 0.8cm},clip,width=\figwidth\linewidth]{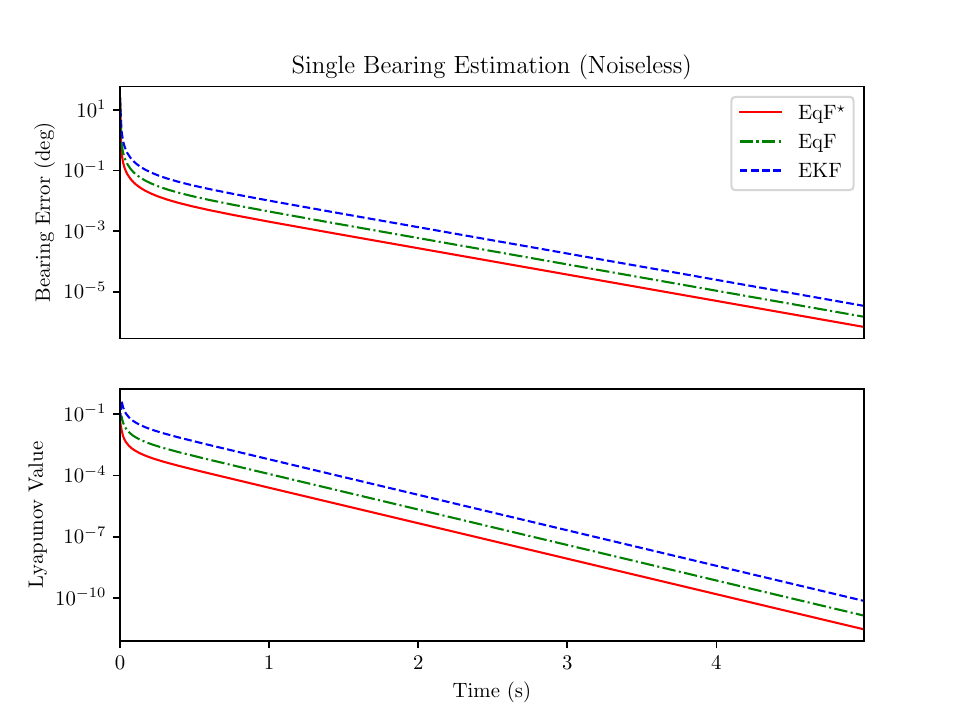}
    \caption{The angle error \eqref{eq:absolute_angle_error} and Lyapunov value \eqref{eq:linearised_lyapunov_func} for each of the filters without noise added to any of the gyroscope or magnetometer measurements.
    The EqF$^\star$ (red solid line) shows faster initial convergence than both the EqF (green dot-dashed line) and the EKF (blue dashed line).
    }
    \label{fig:noiseless_results}
\end{figure}

We also performed 500 monte carlo simulations with noise added to the velocity, measurement, and initial conditions, generated according to \eqref{eq:example_noise}.
Figure \ref{fig:noisy_results} shows the distribution of the absolute angle error \eqref{eq:absolute_angle_error} between the estimated direction and true direction for each filter, as well as the Lyapunov value \eqref{eq:linearised_lyapunov_func} for each of the filters.

\begin{figure}[!htb]
    \centering
    \includegraphics[trim={0 0.2cm 0 0.8cm},clip,width=\figwidth\linewidth]{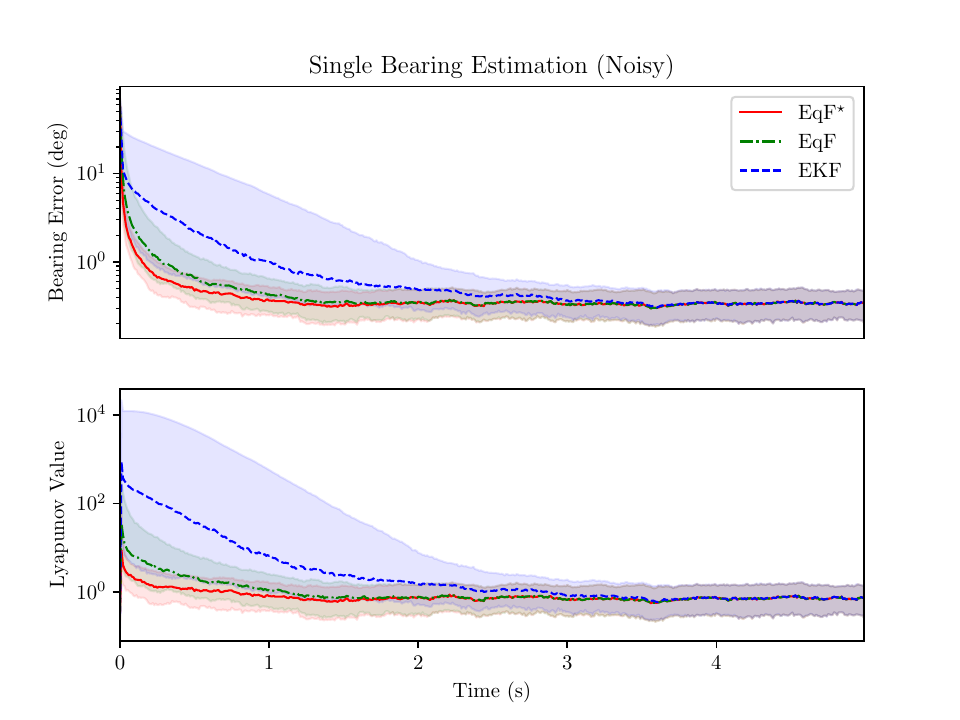}
    \caption{The median angle error \eqref{eq:absolute_angle_error} and Lyapunov value \eqref{eq:linearised_lyapunov_func} for each of the filters over 500 trials with noise generated for each trial independently.
    The EqF$^\star$ (red solid line) outperforms both the EqF (green dot-dashed line) and the EKF (blue dashed line) in terms of both angle error and Lyapunov value.
    The coloured areas show the 25th and 75th percentile for each filter's angle error and Lyapunov value.
    }
    \label{fig:noisy_results}
\end{figure}

Figure \ref{fig:output_linearisation} shows the error introduced by different linearisations of the output.
Each point $\eta \in \Sph^2$ other than $-\eb_1$ has been mapped to $\R^2$ using spherical coordinates, and each heatmap shows, for a given filter, the absolute difference between the true measurement residual $\tilde{y} = \eta - h(\eb_1)$ and the linearised measurement residual $C_t \lchart(\eta)$.
Note that $\lchart(\eta) := \eta - \eb_1$ in the case of the EKF.
The EqF and EKF show comparable performance, with the EKF having slightly higher error near $\eb_1$ and the EqF having slightly higher error further away from $\eb_1$.
The output linearisation used by the EqF$^\star$ is superior to the other filters, both near $\eb_1$ as well as far from it, as expected from Lemma \ref{lem:equivariant_output}.

\begin{figure}[!htb]
    \centering
    \includegraphics[trim={0 0.2cm 0 0.6cm},clip,width=\figwidth\linewidth]{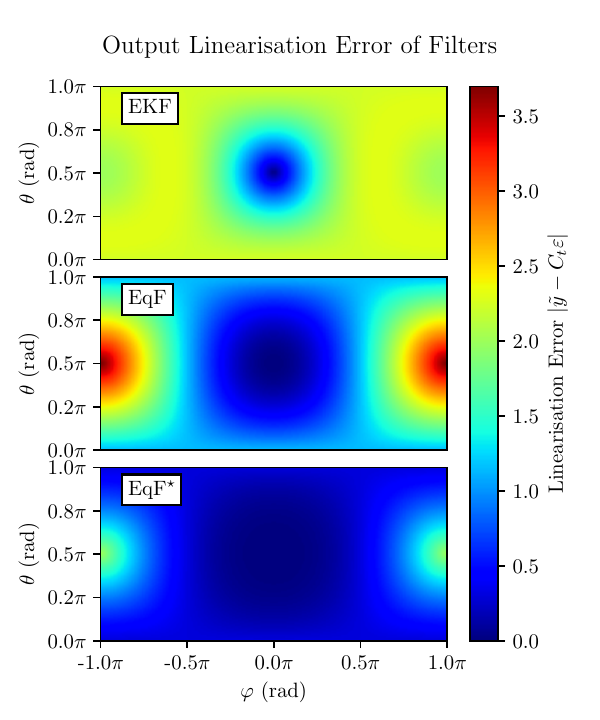}
    \caption{The error of linearising the output residual $\tilde{y}$ for the EKF, EqF, and EqF$^\star$.
    The EKF and EqF are similar in terms of overall quality, with the EqF performing better close to the linearisation point.
    The EqF$^\star$ shows clearly superior performance to both the EKF and EqF as expected from Lemma \ref{lem:equivariant_output}.
    }
    \label{fig:output_linearisation}
\end{figure}


The example system shown is of interest for several reasons.
First, the system is defined on a homogeneous space of a Lie group rather than on a Lie group itself, necessitating the development of a lifted system to apply equivariant observer design methods.
This also precludes the application of the popular IEKF \cite{2017_Barrau_tac} as it is exclusively defined for group affine systems on Lie groups and not on the more general class of equivariant systems on homogeneous spaces.
Second, the system has a symmetry compatible with the output function, enabling the use of Lemma \ref{lem:equivariant_output} to improve the output linearisation.
Figure \ref{fig:output_linearisation} shows that the improved output linearisation results in a significant reduction in linearisation error over the whole space, and Figures \ref{fig:noiseless_results} and \ref{fig:noisy_results} show clearly the positive effect of this improvement on filter performance.

Overall, the simulations demonstrate clearly that the EqF$^\star$ outperforms the EqF, which in turn outperforms the EKF.
The relative margin of improvement may appear small; however, it must be understood in the context of a problem chosen for simplicity and the implementation of an EKF that was designed carefully to exploit the specific structure available, specifically using embedding coordinates in order to obtain linear time varying state dynamics \eqref{eq:ex_sphere_dynamics}.
Conversely, the EqF and EqF$^\star$ were derived following the standard methodology outlined in Appendix \ref{sec:filter_implementation}.
For more complex problems, where there is no longer structure that can be exploited to design a clever EKF, the relative performance gap is expected to increase.

\section{Conclusion}
\label{sec:conclusion}

The Equivariant Filter (EqF) proposed in this paper is a nonlinear observer that exploits symmetry properties of equivariant kinematic systems posed on homogeneous spaces.
The key contributions of this paper are
\begin{itemize}
    \item Proposing a general filter design for equivariant systems with linearised dynamics about a fixed origin point rather than the time-varying state estimate.
    \item Demonstrating how output equivariance leads to an approximation of the output map for the EqF that has third order linearisation error (rather than the usual quadratic linearisation error), improving  filter robustness and transient performance.
    \item Providing a simple example of the EqF application and simulation results where the EqF clearly outperforms the traditional EKF.
\end{itemize}

It is important to note that the example in the present paper is chosen to be as simple as possible.
It may appear that the mathematical overhead of the EqF is not warranted.
Recent works \cite{2021_vangoor_auto,2020_mahony_cdc,2021_goor_EquivariantFilterVisual} provide other examples where the EqF is the only symmetry based filter that can be applied.

The appendices provide details on how to derive and implement the EqF, and the conditions under which the EqF specialises to the invariant extended Kalman filter.

\appendix
\section{Filter Implementation}
\label{sec:filter_implementation}

The following algorithms are used to design the EqF for a given system.
\begin{algorithm}
\caption{EqF Design Preliminaries}
\label{alg:eqf_design_setup}
\begin{enumerate}
    \item Find a Lie group $\grpG$ and state action $\phi : \grpG \times \calM \to \calM$.
    \item Check that the system is equivariant and compute the input symmetry $\psi : \grpG \times \vecL \to \vecL$.
    \item Construct an equivariant lift $\Lambda : \calM \times \vecL \to \gothg$.
    \item Check if there exists an action $\rho : \grpG \times \calN \to \calN$ such that the configuration output is equivariant.
    \item Choose an origin $\mr{\xi} \in \calM$, choose a local coordinate chart $\lchart$ about $\mr{\xi}$, and fix a right-inverse $\tD_X |_\id \phi_{\mr{\xi}}(X)^\dag$ of $\tD_X |_\id \phi_{\mr{\xi}}(X)$.
    \item Initialise the observer state $\hat{X}(0) = \id$ and the Riccati term $\Sigma(0) = \Sigma_0 \in \Sym_+(m)$.
\end{enumerate}
\end{algorithm}

While the filter is presented in continuous time, in practice the equations must be implemented through numerical integration.
Given input and output measurements $u \in \vecL$ and $y = h(\xi) \in \calN$ at a given time, the steps in Algorithm \ref{alg:eqf_design_methodology} are executed.
\begin{algorithm}
\caption{EqF Design Implementation}
\label{alg:eqf_design_methodology}
\begin{enumerate}
    \item Compute the origin velocity $\mr{u} = \psi_{\hat{X}^{-1}}(u)$ and use this to obtain the state matrix $\mr{A}_t$ \eqref{eq:state_matrix_A_dfn} (cf.~Lemma \ref{lem:state_matrix_A_simple}).
    \item Compute the standard output matrix $C_t$ \eqref{eq:output_matrix_C_dfn} or (preferably) the equivariant output matrix $C_t^\star$ \eqref{eq:equivariant_output_matrix}.
    \item Choose state and output gain matrices $M_t \in \Sym_+(m)$ and $N_t \in \Sym_+(n)$.
    \item Update the observer state $\hat{X}(t)$ and Riccati state $\Sigma(t)$ by numerically approximating equations (\ref{eq:eqf_group_observer}-\ref{eq:eqf_riccati}).
\end{enumerate}
\end{algorithm}
For the majority of applications, Euler integration or a higher order Runge-Kutta method is appropriate.
Itazi and Sanyal \cite{2014_izadi_automatica} showed the effectiveness of Lie group variational integrators \cite{2004_leok_foundations} for discretising an observer for invariant attitude dynamics, and a similar approach may also be applied to discretising the EqF for certain systems.

In some cases it may be difficult to compute an explicit algebraic expression for the velocity action $\psi$.
The following Lemma provides a way to implement the EqF without the need to derive $\psi$.

\begin{lemma}
\label{lem:state_matrix_A_simple}
The linearised state matrix $\mr{A}_t$ defined in \eqref{eq:state_matrix_A_dfn} can be written
\begin{align}
    \mr{A}_t
    &= \tD_e |_{\mr{\xi}} \lchart(e)
    \tD_\xi |_{\hat{\xi}} \phi_{\hat{X}^{-1}} (\xi)
    \cdot \tD_E |_\id \phi_{\hat{\xi}} (E) \notag \\
    & \phantom{=} \cdot \tD_\xi |_{\phi_{\hat{X}}(\mr{\xi})} \Lambda(\xi, u)
    \cdot \tD_e |_{\mr{\xi}} \phi_{\hat{X}} (e)
    \cdot \tD_\varepsilon |_0 \lchart^{-1}(\varepsilon).
    \label{eq:state_matrix_A_alt_dfn}
\end{align}
\end{lemma}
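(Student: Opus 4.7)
The plan is to show that the two expressions for $\mr{A}_t$ coincide by rewriting the middle block $\tD_E|_\id \phi_{\mr{\xi}}(E) \cdot \tD_e|_{\mr{\xi}} \Lambda(e, \mr{u})$ using the equivariance of the lift and a conjugation identity for the right action, thereby replacing the dependence on $\mr{u}$ (and hence on $\psi$) by quantities expressed directly in terms of $u$, $\hat{X}$ and $\hat{\xi}$.

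First I would exploit the equivariance of the lift. Recalling $\mr{u} = \psi(\hat{X}^{-1}, u)$ and the right-action identity $\psi(\hat{X}, \psi(\hat{X}^{-1}, u)) = u$, equation \eqref{eq:equivariant_lift} applied with $X = \hat{X}$ gives
\begin{align*}
    \Lambda(\phi_{\hat{X}}(e), u) = \Ad_{\hat{X}^{-1}} \Lambda(e, \mr{u}),
\end{align*}
so that $\Lambda(e, \mr{u}) = \Ad_{\hat{X}} \Lambda(\phi_{\hat{X}}(e), u)$. Differentiating both sides with respect to $e$ at $e = \mr{\xi}$ (noting that $\phi_{\hat{X}}(\mr{\xi}) = \hat{\xi}$ and $\Ad_{\hat{X}}$ is a constant linear map on $\gothg$), the chain rule yields
\begin{align*}
    \tD_e|_{\mr{\xi}} \Lambda(e, \mr{u})
    = \Ad_{\hat{X}} \cdot \tD_\xi|_{\hat{\xi}} \Lambda(\xi, u) \cdot \tD_e|_{\mr{\xi}} \phi_{\hat{X}}(e).
\end{align*}

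Next I would establish a conjugation identity that absorbs the factor $\Ad_{\hat{X}}$ into derivatives of $\phi$. For a right action, $\phi(X, \phi(Y,\xi)) = \phi(YX,\xi)$, so for every $E \in \grpG$,
\begin{align*}
    \phi_{\hat{X}^{-1}}(\phi_{\hat{\xi}}(E))
    = \phi(E\hat{X}^{-1}, \hat{\xi})
    = \phi(E\hat{X}^{-1}, \phi(\hat{X}, \mr{\xi}))
    = \phi(\hat{X} E \hat{X}^{-1}, \mr{\xi}).
\end{align*}
In other words, $\phi_{\hat{X}^{-1}} \circ \phi_{\hat{\xi}} = \phi_{\mr{\xi}} \circ C_{\hat{X}}$, where $C_{\hat{X}}(E) = \hat{X} E \hat{X}^{-1}$. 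Differentiating at $E = \id$ (where $C_{\hat{X}}(\id) = \id$ and $\tD C_{\hat{X}}|_\id = \Ad_{\hat{X}}$) gives
\begin{align*}
    \tD_E|_\id \phi_{\mr{\xi}}(E) \cdot \Ad_{\hat{X}}
    = \tD_\xi|_{\hat{\xi}} \phi_{\hat{X}^{-1}}(\xi) \cdot \tD_E|_\id \phi_{\hat{\xi}}(E).
\end{align*}

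Finally, composing the two identities establishes the lemma: inserting the expression for $\tD_e|_{\mr{\xi}} \Lambda(e, \mr{u})$ into \eqref{eq:state_matrix_A_dfn} and then replacing $\tD_E|_\id \phi_{\mr{\xi}}(E) \cdot \Ad_{\hat{X}}$ using the conjugation identity produces \eqref{eq:state_matrix_A_alt_dfn} verbatim. No further computation is needed. The only real obstacle is a bookkeeping one: being careful that the composition order of the differentials is consistent with the right-action convention $\phi(X, \phi(Y, \cdot)) = \phi(YX, \cdot)$ so that the conjugation, rather than its inverse, arises when one commutes $\phi_{\hat{X}^{-1}}$ past $\phi_{\hat{\xi}}$. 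Once that is handled, the remaining steps are straightforward applications of the chain rule.
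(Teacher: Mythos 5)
Your proposal is correct and follows essentially the same route as the paper's proof: apply the equivariance of the lift with $X=\hat{X}$ (using $\psi(\hat{X},\psi(\hat{X}^{-1},u))=u$) to write $\Lambda(e,\mr{u})=\Ad_{\hat{X}}\Lambda(\phi_{\hat{X}}(e),u)$, differentiate at $e=\mr{\xi}$ by the chain rule, and absorb $\Ad_{\hat{X}}$ via $\tD_E|_\id\phi_{\mr{\xi}}(E)\cdot\Ad_{\hat{X}}=\tD_\xi|_{\hat{\xi}}\phi_{\hat{X}^{-1}}(\xi)\cdot\tD_E|_\id\phi_{\hat{\xi}}(E)$. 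Your only addition is that you prove this last conjugation identity explicitly (via $\phi_{\hat{X}^{-1}}\circ\phi_{\hat{\xi}}=\phi_{\mr{\xi}}\circ C_{\hat{X}}$), which the paper simply asserts in its final equality.
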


\begin{proof}
Recall the equivariant lift condition \eqref{eq:equivariant_lift}.
It follows that
\begin{align*}
    \tD_E & |_\id  \phi_{\mr{\xi}} (E)
    \cdot \tD_e |_{\mr{\xi}} \Lambda(e, \mr{u}) \\
    &= \tD_E |_\id \phi_{\mr{\xi}} (E)
    \cdot \tD_e |_{\mr{\xi}} \Ad_{\hat{X}} \Lambda(\phi_{\hat{X}}(e), u), \\
    &= \tD_E |_\id \phi_{\mr{\xi}} (E) \Ad_{\hat{X}}
    \cdot \tD_e |_{\mr{\xi}} \Lambda(\phi_{\hat{X}}(e), u), \\
    &= \tD_E |_\id \phi_{\mr{\xi}} (E) \Ad_{\hat{X}}
    \cdot \tD_\xi |_{\phi_{\hat{X}}(\mr{\xi})} \Lambda(\xi, u)
    \cdot \tD_e |_{\mr{\xi}} \phi_{\hat{X}} (e), \\
    &= \tD_\xi |_{\hat{\xi}} \phi_{\hat{X}^{-1}} (\xi)
    \cdot \tD_E |_\id \phi_{\hat{\xi}} (E)
    \cdot \tD_\xi |_{\phi_{\hat{X}}(\mr{\xi})} \Lambda(\xi, u)
    \cdot \tD_e |_{\mr{\xi}} \phi_{\hat{X}} (e).
\end{align*}
Then the expression \eqref{eq:state_matrix_A_alt_dfn} follows from the definition of $\mr{A}_t$ in \eqref{eq:state_matrix_A_dfn}.
Unlike \eqref{eq:state_matrix_A_dfn}, the expression \eqref{eq:state_matrix_A_alt_dfn} depends only the measured signal $u \in \vecL$ and not on the origin velocity $\mr{u} = \psi_{\hat{X}^{-1}}(u) \in \vecL$.
\end{proof}

While the definitions of $\mr{A}_t$ in \eqref{eq:state_matrix_A_dfn} and \eqref{eq:state_matrix_A_alt_dfn} are equivalent, they present different challenges in practical implementation of the filter equations.
Using the definition \eqref{eq:state_matrix_A_dfn} requires an explicit algebraic expression for the velocity action $\psi$, which may be challenging to compute.
On the other hand, \eqref{eq:state_matrix_A_alt_dfn} is independent of the algebraic expression of $\psi$, but requires the differentials $\tD_\xi |_{\hat{\xi}} \phi_{\hat{X}^{-1}} (\xi)$ and $\tD_\xi |_{\phi_{\hat{X}}(\mr{\xi})} \Lambda(\xi, u)$ to be recomputed at different state elements $\hat{\xi} \in \calM$ for each iteration.
The expression \eqref{eq:state_matrix_A_alt_dfn} is particularly useful in situations where the equivariant velocity extension of a system is infinite \cite{2021_mahony_EquivariantFilterDesign}.

\section{Specialisation to IEKF}
\label{sec:iekf_specialisation}

The EqF specialises to an invariant extended Kalman filter (IEKF) \cite{2017_Barrau_tac} for a certain subclass of equivariant systems; specifically those with group affine dynamics on a Lie group, where the origin is chosen to be the identity, and where the local coordinates are chosen to be the exponential map in the EqF implementation.

Consider a system function $f : \vecL \to \gothX(\calG)$ where $\calG$ is the torsor of an $m$-dimensional matrix Lie group $\grpG \subset \GL(d)$.
Right translation $R : \grpG \times \calG \to \calG$, defined by $R_X (B) = B X$ is a smooth, transitive right action of $\grpG$ on $\calG$.
Suppose the system is equivariant with respect to $R$ and some velocity action $\psi$, that is,
\begin{align*}
    \tD R_X f_{u}(P) = f_{\psi_X(u)} (P X),
\end{align*}
for all $P \in \calG$, $X \in \grpG$ and $u \in \vecL$.
Define the lift $\Lambda : \calG \times \vecL \to \gothg$ to be
\begin{align*}
    \Lambda(P, u) = P^{-1} f_{u}(P),
\end{align*}
where $P^{-1}$ is understood as a matrix inverse.

Let $P \in \calG$ denote the true state of the system.
Choose the origin element, $\mr{P} = I_d$, to be the identity matrix and let $\hat{X} \in \grpG$ denote the state of the observer, with dynamics
\begin{align*}
    \ddt \hat{X} &:= \hat{X} \Lambda(R_{\hat{X}}(\mr{P}), u) + \Delta \hat{X}, \\
    &= \hat{X} \Lambda(\hat{X}, u) + \Delta \hat{X}, \\
    &= \hat{X} \hat{X}^{-1} f_{u}(\hat{X}) + \Delta \hat{X}, \\
    &= f_{u}(\hat{X}) + \Delta \hat{X}.
\end{align*}
In \cite{2017_Barrau_tac}, Barrau \etal showed (see also \cite[Remark 7.1]{2021_mahony_EquivariantFilterDesign}) that if the system $f$ is ``group affine'', then the dynamics of the error $E := P \hat{X}^{-1}$ depend only on $u$ and not the origin velocity $\mr{u}$,
\begin{align*}
    \dot{E} &= f_{u}(E) - E f_{u}(I_d)- E \Delta.
\end{align*}
Let $\lchart : \calU_{I_d} \to \R^m$ be the normal coordinates for $\grpG$ about $I_d$ so that $\varepsilon = \lchart(E) = \log(E)^\vee$.
The pre-observer ($\Delta \equiv 0$) dynamics of $\varepsilon$ about $\varepsilon = 0$ are exactly
\begin{align}
    \dot{\varepsilon}
    &=  \mr{A}_t \varepsilon,
    \notag \\
    \mr{A}_t \varepsilon
    &= \left( \tD_E |_{I_n} \Lambda(E, u)[\varepsilon^\wedge] \right)^\vee. \notag
\end{align}

Let the output space $\calN$ be the Euclidean space $\R^n$, and let the configuration output $h: \calG \to \R^n$ be any map; that is, $h$ is not necessarily equivariant.
Then the EqF (\ref{eq:eqf_group_observer}-\ref{eq:eqf_riccati}) with standard output matrix specialises to the invariant extended Kalman filter (IEKF) proposed in \cite{2017_Barrau_tac}.

In \cite{2017_Barrau_tac}, the pre-observer dynamics of $\varepsilon$ are shown to be exactly linear as a consequence of the novel `group affine' property.
In this case the IEKF is locally asymptotically stable with a constant convergence radius \cite{2017_Barrau_tac}.


\bibliographystyle{plain}
\bibliography{references}

@inproceedings{2020_vangoor_cdc,
  title = {Equivariant {{Filter}} ({{EqF}}): {{A General Filter Design}} for {{Systems}} on {{Homogeneous Spaces}}},
  shorttitle = {Equivariant {{Filter}} ({{EqF}})},
  booktitle = {2020 59th {{IEEE Conference}} on {{Decision}} and {{Control}} ({{CDC}})},
  author = {van Goor, P. and Hamel, T. and Mahony, R.},
  year = {2020},
  month = dec,
  pages = {5401--5408},
  issn = {2576-2370},
  doi = {10.1109/CDC42340.2020.9303813},
  copyright = {All rights reserved},
  keywords = {attitude estimation,cameras,Design methodology,EqF,equivariant filter,equivariant structure,filtering theory,general filter design,homogeneous space,Kalman filters,Kinematics,Lie groups,Manifolds,mechanical systems,mobile robots,monocular camera,Observers,position control,Robot kinematics,Simultaneous localization and mapping,SLAM (robots),symmetry action,symmetry based filter design methodologies,transitive Lie-group symmetry}
}

@inproceedings{2021_goor_EquivariantFilterVisual,
  title = {An {{Equivariant Filter}} for {{Visual Inertial Odometry}}},
  booktitle = {2021 {{IEEE International Conference}} on {{Robotics}} and {{Automation}} ({{ICRA}})},
  author = {van Goor, Pieter and Mahony, Robert},
  year = {2021},
  month = may,
  pages = {14432--14438},
  issn = {2577-087X},
  doi = {10.1109/ICRA48506.2021.9561769},
  keywords = {Automation,Conferences,Manifolds,Measurement units,Memory management,Simultaneous localization and mapping,Visualization}
}

@article{2021_mahony_EquivariantFilterDesign,
  title = {Equivariant {{Filter Design}} for {{Kinematic Systems}} on {{Lie Groups}}},
  author = {Mahony, Robert and Trumpf, Jochen},
  year = {2021},
  month = jan,
  journal = {IFAC-PapersOnLine},
  series = {24th {{International Symposium}} on {{Mathematical Theory}} of {{Networks}} and {{Systems MTNS}} 2020},
  volume = {54},
  number = {9},
  pages = {253--260},
  issn = {2405-8963},
  doi = {10.1016/j.ifacol.2021.06.148},
  langid = {english},
  keywords = {Equivariant Filter (EqF),Equivariant Systems Theory,Nonlinear Observers}
}

@article{2001_deylon_tac,
  title={A note on uniform observability},
  author={Delyon, Bernard},
  journal={IEEE Transactions on Automatic Control},
  volume={46},
  number={8},
  pages={1326--1327},
  year={2001},
  publisher={IEEE}
}

@article{2016_Barrau_arxive,
  author  = {Axel Barrau and Silvere Bonnabel},
  title   = {An {EKF}-{SLAM} algorithm with consistency properties},
  year    = {2016},
  url     = {https://arxiv.org/abs/1510.06263v3},
  note    = {arXiv:1510.06263},
  journal = {arXiv:1510.06263},
}

@book{1997_jurdjevic_geometric_control,
  title={Geometric control theory},
  author={Jurdjevic, Velimir},
  year={1997},
  publisher={Cambridge university press}
}

@article{1990_cheng_SIAM,
  title={Observability of systems on Lie groups and coset spaces},
  author={Cheng, D and Dayawansa, WP and Martin, CF},
  journal={SIAM journal on control and optimization},
  volume={28},
  number={3},
  pages={570--581},
  year={1990},
  publisher={SIAM}
}

@article{2015_khosravian_automatica,
  title={Observers for invariant systems on Lie groups with biased input measurements and homogeneous outputs},
  author={Khosravian, Alireza and Trumpf, Jochen and Mahony, Robert and Lageman, Christian},
  journal={Automatica},
  volume={55},
  pages={19--26},
  year={2015},
  publisher={Elsevier}
}

@article{2009_Bonnabel_TAC,
  title={Non-linear symmetry-preserving observers on Lie groups},
  author={Bonnabel, Silvere and Martin, Philippe and Rouchon, Pierre},
  journal={IEEE Transactions on Automatic Control},
  volume={54},
  number={7},
  pages={1709--1713},
  year={2009},
  publisher={IEEE}
}

@article{1972_jurdjevic_JDE,
  title={Control systems on Lie groups},
  author={Jurdjevic, Velimir and Sussmann, H{\'e}ctor J},
  journal={Journal of Differential equations},
  volume={12},
  number={2},
  pages={313--329},
  year={1972},
  publisher={Academic Press}
}

@article{1973_brocket_SIAM,
  title={Lie theory and control systems defined on spheres},
  author={Brockett, RW},
  journal={SIAM Journal on Applied Mathematics},
  volume={25},
  number={2},
  pages={213--225},
  year={1973},
  publisher={SIAM}
}

@article{1972_brocket_SIAM,
  title={System theory on group manifolds and coset spaces},
  author={Brockett, Roger W},
  journal={SIAM Journal on control},
  volume={10},
  number={2},
  pages={265--284},
  year={1972},
  publisher={SIAM}
}

@article{2003_thienel_TAC,
  title={A coupled nonlinear spacecraft attitude controller and observer with an unknown constant gyro bias and gyro noise},
  author={Thienel, J and Sanner, Robert M},
  journal={IEEE Transactions on Automatic Control},
  volume={48},
  number={11},
  pages={2011--2015},
  year={2003},
  publisher={IEEE}
}

@article{1991_salcudean_TAC,
  title={A globally convergent angular velocity observer for rigid body motion},
  author={Salcudean, S},
  journal={IEEE transactions on Automatic Control},
  volume={36},
  number={12},
  pages={1493--1497},
  year={1991},
  publisher={IEEE}
}

@article{2003_aghannan_TAC,
  title={An intrinsic observer for a class of Lagrangian systems},
  author={Aghannan, Nasradine and Rouchon, Pierre},
  journal={IEEE Transactions on Automatic Control},
  volume={48},
  number={6},
  pages={936--945},
  year={2003},
  publisher={IEEE}
}

@article{2009_lageman_TAC,
  title={Gradient-like observers for invariant dynamics on a Lie group},
  author={Lageman, Christian and Trumpf, Jochen and Mahony, Robert},
  journal={IEEE Transactions on Automatic Control},
  volume={55},
  number={2},
  pages={367--377},
  year={2009},
  publisher={IEEE}
}

@misc{2021_mahony_arcar,
      title={Observer Design for Nonlinear Systems with Equivariance}, 
      author={Robert Mahony and Pieter van Goor and Tarek Hamel},
      year={2021},
      eprint={2108.09387},
      archivePrefix={arXiv},
      primaryClass={eess.SY}
}

@article{2021_vangoor_auto,
title = {Constructive observer design for Visual Simultaneous Localisation and Mapping},
journal = {Automatica},
volume = {132},
pages = {109803},
year = {2021},
issn = {0005-1098},
doi = {https://doi.org/10.1016/j.automatica.2021.109803},
url = {https://www.sciencedirect.com/science/article/pii/S000510982100323X},
author = {Pieter {van Goor} and Robert Mahony and Tarek Hamel and Jochen Trumpf},
keywords = {Equivariant observer, Visual odometry (VO), Simultaneous Localisation and Mapping (SLAM)},
abstract = {Visual Simultaneous Localisation and Mapping (VSLAM) is a well-known problem in robotics with a large range of applications. This paper presents a novel approach to VSLAM by lifting the observer design to a novel Lie group VSLAMn(3) on which the system output is equivariant. The perspective gained from this analysis facilitates the design of a non-linear observer with almost semi-globally asymptotically stable error dynamics. Simulations are provided to illustrate the behaviour of the proposed observer and experiments on data gathered using a fixed-wing UAV flying outdoors demonstrate its performance.}
}

@inproceedings{2020_mahony_cdc,
  title={Equivariant visual odometry in the wild},
  author={Mahony, Robert and van Goor, Pieter and Henein, Mina and Pike, Ryan and Zhang, Jun and Ng, Yonhon},
  booktitle={2020 59th IEEE Conference on Decision and Control (CDC)},
  pages={1314--1319},
  year={2020},
  organization={IEEE}
}

@ARTICLE{2008_Mahony_tac,
author={R. Mahony and T. Hamel and J. Pflimlin},
journal={IEEE Transactions on Automatic Control},
title={Nonlinear Complementary Filters on the Special Orthogonal Group},
year={2008},
volume={53},
number={5},
pages={1203-1218},
keywords={attitude control;Lyapunov methods;nonlinear filters;observers;SO(3) groups;nonlinear complementary filters;special orthogonal group;attitude estimates;inertial measurement units;SO(3);Lyapunov analysis;deterministic observer kinematics;Passive filters;Costs;Measurement units;Noise level;Time varying systems;Additive noise;Filtering;Kinematics;Position measurement;Angular velocity;Attitude estimates;complementary filter;nonlinear observer;special orthogonal group},
doi={10.1109/TAC.2008.923738},
ISSN={0018-9286},
month={June},}

@Article{2008_Bonnabel_TAC,
  Title                    = {Symmetry-Preserving Observers},
  Author                   = {S. Bonnabel and P. Martin and P. Rouchon},
  Journal                  = {IEEE Transactions on Automatic Control},
  Year                     = {2008},
  Number                   = {11},
  Pages                    = {2514--2526},
  Volume                   = {53}
}

@article{2015_saccon_tac,
  title={Second-order-optimal minimum-energy filters on lie groups},
  author={Saccon, Alessandro and Trumpf, Jochen and Mahony, Robert and Aguiar, A Pedro},
  journal={IEEE Transactions on Automatic Control},
  volume={61},
  number={10},
  pages={2906--2919},
  year={2015},
  publisher={IEEE}
}

@article{2013_zamani_tac,
  title={Minimum-energy filtering for attitude estimation},
  author={Zamani, Mohammad and Trumpf, Jochen and Mahony, Robert},
  journal={IEEE Transactions on Automatic Control},
  volume={58},
  number={11},
  pages={2917--2921},
  year={2013},
  publisher={IEEE}
}

@article{2015_zamani_arxiv,
  title={Nonlinear attitude filtering: A comparison study},
  author={Zamani, Mohammad and Trumpf, Jochen and Mahony, R},
  journal={arXiv preprint arXiv:1502.03990},
  year={2015}
}

@InProceedings{RM_2013_Mahony_nolcos,
  Title                    = {Observers for Kinematic Systems with Symmetry},
  Author                   = {Robert Mahony and Jochen Trumpf and Tarek Hamel},
  Booktitle                = {Proceedings of 9th IFAC Symposium on Nonlinear Control Systems (NOLCOS)},
  Year                     = {2013},
  Note                     = {Plenary paper.},
  Pages                    = {17 pages},

  Outcome_of               = {DP120100316},
  Owner                    = {u4033888},
  Starred_in               = {DP160100783},
  Timestamp                = {2013.06.05}
}

@InProceedings{2007_Bonnabel_cdc,
  author    = {Silvere Bonnabel},
  title     = {Left-invariant extended Kalman filter and attitude estimation},
  booktitle = {Procedings of the IEEE Conference on Decision and Control (CDC)},
  year      = {2007},
  pages     = {6 pages},
  doi       = {DOI: 10.1109/CDC.2007.4434662},
  address   = {New Orleans, LA, USA},
}

@InProceedings{2009_Bonnabel_cdc,
  author    = {S. Bonnabel and P. Martin and E. Salaun},
  title     = {Invariant Extended Kalman Filter: Theory and application to a velocity-aided attitude estimation problem},
  booktitle = {IEEE Conference on Decision and Control},
  year      = {2009},
  pages     = {1297--1304},
}

@Article{2017_Barrau_tac,
  author  = {Axel Barrau and Silvère Bonnabel},
  title   = {The Invariant Extended Kalman Filter as a Stable Observer},
  journal = {IEEE Transactions on Automatic Control},
  year    = {2017},
  volume  = {62},
  number  = {4},
  pages   = {1797--1812},
  doi     = {DOI: 10.1109/TAC.2016.2594085},
}

@InProceedings{2017_Mahony_cdc,
  author    = {Robert Mahony and Tarek Hamel},
  title     = {A Geometric Nonlinear Observer for Simultaneous Localisation and Mapping},
  booktitle = {Conference on Decision and Control},
  year      = {2017},
  month     = {December},
  pages     = {6 pages},
  address   = {Melbourne},
}

@article{2018_Stovner_automatica,
  title = "Attitude estimation by multiplicative exogenous Kalman filter",
  journal = "Automatica",
  volume = "95",
  pages = "347 - 355",
  year = "2018",
  issn = "0005-1098",
  doi = "https://doi.org/10.1016/j.automatica.2018.05.038",
  url = "http://www.sciencedirect.com/science/article/pii/S0005109818302802",
  author = "B{\aa}rd Nagy Stovner and Tor Arne Johansen and Thor I. Fossen and Ingrid Schj{\o}lberg",
  keywords = "Nonlinear observer and filter design, Global exponential stability, Kalman filter",
}

@inproceedings{2016_johansen_ecc,
  title={Nonlinear filtering with exogenous Kalman filter and double Kalman filter},
  author={Johansen, Tor A and Fossen, Thor I},
  booktitle={2016 European Control Conference (ECC)},
  pages={1722--1727},
  year={2016},
  organization={IEEE}
}

@article{2017_hamel_tac,
  title={Riccati observers for the nonstationary PnP problem},
  author={Hamel, Tarek and Samson, Claude},
  journal={IEEE Transactions on Automatic Control},
  volume={63},
  number={3},
  pages={726--741},
  year={2017},
  publisher={IEEE}
}

@book{1982_maybeck_stochastic,
  title={Stochastic models, estimation, and control},
  author={Maybeck, Peter S},
  year={1982},
  publisher={Academic press}
}

@inproceedings{1999_roumeliotis_icra,
  title={Smoother based 3D attitude estimation for mobile robot localization},
  author={Roumeliotis, Stergios I and Sukhatme, Gaurav S and Bekey, George A},
  booktitle={Proceedings 1999 IEEE International Conference on Robotics and Automation (Cat. No. 99CH36288C)},
  volume={3},
  pages={1979--1986},
  year={1999},
  organization={IEEE}
}

@article{2017_sola_quaternion,
  title={Quaternion kinematics for the error-state Kalman filter},
  author={Sola, Joan},
  journal={arXiv preprint arXiv:1711.02508},
  year={2017}
}

@incollection{2003_krener_ekf,
  title={The convergence of the extended Kalman filter},
  author={Krener, Arthur J},
  booktitle={Directions in mathematical systems theory and optimization},
  pages={173--182},
  year={2003},
  publisher={Springer}
}

@book{1963_kobayashi_foundations,
  title={Foundations of differential geometry},
  author={Kobayashi, Shoshichi and Nomizu, Katsumi},
  number={2},
  year={1963},
  publisher={New York, London}
}

@phdthesis{2004_leok_foundations,
  title={Foundations of computational geometric mechanics},
  author={Leok, Melvin},
  year={2004},
  school={California Institute of Technology}
}

@article{2014_izadi_automatica,
  title={Rigid body attitude estimation based on the Lagrange--d’Alembert principle},
  author={Izadi, Maziar and Sanyal, Amit K},
  journal={Automatica},
  volume={50},
  number={10},
  pages={2570--2577},
  year={2014},
  publisher={Elsevier}
}

@InProceedings{2017-Morin-geometric,
author="Morin, Pascal
and Eudes, Alexandre
and Scandaroli, Glauco",
editor="Nielsen, Frank
and Barbaresco, Fr{\'e}d{\'e}ric",
title="Uniform Observability of Linear Time-Varying Systems and Application to Robotics Problems",
booktitle="Geometric Science of Information",
year="2017",
publisher="Springer International Publishing",
address="Cham",
pages="336--344",
}

@incollection{2012_lee_SmoothManifolds,
  title = {Smooth {{Manifolds}}},
  booktitle = {Introduction to {{Smooth Manifolds}}},
  author = {Lee, John M.},
  editor = {Lee, John M.},
  year = {2012},
  series = {Graduate {{Texts}} in {{Mathematics}}},
  publisher = {{Springer}},
  address = {{New York, NY}},
  doi = {10.1007/978-1-4419-9982-5_1},
  isbn = {978-1-4419-9982-5},
  language = {en},
  keywords = {Coordinate Chart,Open Subset,Smooth Manifold,Smooth Structure,Topological Space}
}

@article{1990_tahk_tac,
  author={Tahk, M. and Speyer, J.L.},
  journal={IEEE Transactions on Automatic Control}, 
  title={Target tracking problems subject to kinematic constraints}, 
  year={1990},
  volume={35},
  number={3},
  pages={324-326},
  doi={10.1109/9.50348}
}



\newcommand{\biopicwidth}{1in}

\end{document}